\documentclass[twoside,leqno,twocolumn]{article}

\pdfminorversion=4

\newif\ifdebug
\debugtrue

\newif\ifwarn
\warnfalse

\newif\ifsdm
\sdmfalse

\usepackage{ltexpprt}
\usepackage{afterpage}
\usepackage[utf8]{inputenc}
\usepackage{dsfont,amsmath,amssymb,amsfonts,upgreek,bbold,marvosym} 
\usepackage[dvipsnames]{xcolor}
\usepackage{graphicx}
\usepackage{verbatim}
\usepackage{url,hyperref}
\usepackage{float}
\usepackage{mathtools}
\usepackage{mathptmx}
\usepackage{tikz}
\usetikzlibrary{arrows,matrix}
\usepackage{etoolbox}
\usepackage[perpage,para,symbol*]{footmisc}
\usepackage{xspace}
\usepackage{ifthen}
\usepackage{enumitem}
\usepackage{algorithm,algorithmic}
\usepackage{subcaption}
\graphicspath{{./fig/}}

\ifdebug
    \definecolor{myblue}{rgb}{0.0, 0.0, 1.0}
    \definecolor{mypink}{rgb}{1.0, 0.01, 0.24}
\else
    \definecolor{myblue}{rgb}{0.0, 0.0, 0.0}
\fi

\newcommand{\bull}{\ensuremath{\boldsymbol{\triangleright}}\xspace}

\newcommand{\onebb}[0]{\mathds{1}}

\newcommand{\diag}[0]{\ensuremath\operatorname{diag}}
\newcommand{\bigoh}[0]{\ensuremath{O}}

\newcommand{\trans}[0]{\ensuremath\intercal}

\renewcommand{\emptyset}[0]{\varnothing}
\renewcommand{\alpha}[0]{\upalpha}
\renewcommand{\beta}[0]{\upbeta}
\renewcommand{\lambda}[0]{\uplambda}
\renewcommand{\sigma}[0]{\upsigma}
\renewcommand{\iff}[0]{\Leftrightarrow}

\DeclareMathOperator*{\argmin}{arg\,min}

\newcommand{\Wnew}[0]{\ensuremath{\widetilde{W}}}
\newcommand{\Wkc}[0]{\ensuremath{W^{KC}}}
\newcommand{\Wkcnew}[0]{\ensuremath{\widetilde{W}^{KC}}}
\newcommand{\Wbinnew}[0]{\ensuremath{\widetilde{W}_{01}}}
\newcommand{\xnew}[0]{\ensuremath{\widetilde{x}}}
\newcommand{\pinew}[0]{\ensuremath{\widetilde{\pi}}}

\newcommand{\expect}[1]{\ensuremath{\operatorname{\mathbb{E}}\left[#1\right]}}
\newcommand{\edgeweight}[1]{\ensuremath{\theta_{#1}}}
\newcommand{\gap}[0]{\ensuremath{\operatorname{gap}}}
\newcommand{\acv}[0]{\ensuremath{\langle \pi, x(0) \rangle}\xspace}
\newcommand{\acvnewx}[0]{\ensuremath{\langle \pi, \xnew(0) \rangle}\xspace}
\newcommand{\acvnewxpi}[0]{\ensuremath{\langle \pinew, \xnew(0) \rangle}\xspace}
\newcommand{\acvs}[0]{\ensuremath{\langle \pi, x \rangle}\xspace}
\newcommand{\acvnewxs}[0]{\ensuremath{\langle \pi, \xnew \rangle}\xspace}
\newcommand{\acvnewxpis}[0]{\ensuremath{\langle \pinew, \xnew \rangle}\xspace}

\newcommand{\ssp}[0]{\ensuremath{\operatorname{SSP}}\xspace}
\newcommand{\knapsack}[0]{\ensuremath{\operatorname{KNAPSACK}}\xspace}
\newcommand{\kssp}[0]{\ensuremath{\operatorname{kSSP_{01}}}\xspace}
\newcommand{\diver}[0]{\ensuremath{\operatorname{DIVER}}\xspace}

\patchcmd{\bordermatrix}{8.75}{4.75}{}{}
\patchcmd{\bordermatrix}{\left(}{\left[}{}{}
\patchcmd{\bordermatrix}{\right)}{\right]}{}{}

\makeatletter
\newcommand*\rel@kern[1]{\kern#1\dimexpr\macc@kerna}
\newcommand*\widebar[1]{%
  \begingroup
  \def\mathaccent##1##2{%
    \rel@kern{0.8}%
    \overline{\rel@kern{-0.8}\macc@nucleus\rel@kern{0.2}}%
    \rel@kern{-0.2}%
  }%
  \macc@depth\@ne
  \let\math@bgroup\@empty \let\math@egroup\macc@set@skewchar
  \mathsurround\z@ \frozen@everymath{\mathgroup\macc@group\relax}%
  \macc@set@skewchar\relax
  \let\mathaccentV\macc@nested@a
  \macc@nested@a\relax111{#1}%
  \endgroup
}
\makeatother

\newcommand{\citex}[2]{\cite[#2]{#1}}

\newtheorem{definition}{Definition}
\newtheorem{assumption}{Assumption}

\newcommand{\vsa}{\vspace*{-0.1cm}}

\newcommand{\vsbb}{\vspace*{-0.3cm}}
\newcommand{\vsc}{\vspace*{-0.4cm}}

\newcommand*{\qedbull}{\hfill\ensuremath{\square}}

\ifwarn
    \usepackage{background}
    \SetBgContents{{\color{gray}\textbf{
        Unpublished material. Do not re-distribute~\copyright~2017 Victor Amelkin
    }}}
    \SetBgPosition{-0.65in,-4.5in}
    \SetBgOpacity{1.0}
    \SetBgAngle{90.0}
    \SetBgScale{1}
\fi

\title{ \Large Disabling External Influence in Social Networks via
    Edge Recommendation}

\author{%
Victor Amelkin~\thanks{University of California, Santa Barbara (\texttt{victor@cs.ucsb.edu})}~~\footnotemark[3]
\and
Ambuj K. Singh~\thanks{University of California, Santa Barbara (\texttt{ambuj@cs.ucsb.edu})}~~\thanks{This work was supported by the U.~S.~Army Research
Laboratory and the U.~S.~Army Research Office under grant number W911NF-15-1-0577.}%
}

\date{}

\begin{document}

\maketitle

\ifsdm
\fancyfoot[R]{\footnotesize{\textbf{%
Copyright \textcopyright\ 2018 by SIAM\\Unauthorized reproduction of this article is prohibited%
}}}
\fi

\begin{abstract} \small\baselineskip=9pt
    Existing socio-psychological studies suggest that users of a social network
    form their opinions relying on the opinions of their neighbors. According to
    DeGroot opinion formation model, one value of particular importance
    is the asymptotic consensus value $\langle \pi, x \rangle$---the sum of user
    opinions $x_i$ weighted by the users' eigenvector centralities $\pi_i$. This
    value plays the role of an attractor for the opinions in the network and is
    a lucrative target for external influence. However, since any potentially
    malicious control of the opinion distribution in a social network is clearly
    undesirable, it is important to design methods to prevent the external attempts
    to strategically change the asymptotic consensus value.
    In this work, we assume that the adversary wants to maximize the asymptotic
    consensus value by altering the opinions of some users in a network; we, then,
    state \diver---an NP-hard problem of disabling such external influence attempts by
    strategically adding a limited number of edges to the network. Relying on the
    theory of Markov chains, we provide perturbation analysis that shows how
    eigenvector centrality and, hence, \diver's objective function change in
    response to an edge's addition to the network. The latter leads to the design of a pseudo-linear-time heuristic for \diver, whose
    computation relies on efficient estimation of mean first passage times in a
    Markov chain. We confirm our theoretical findings in experiments.
\end{abstract}

\ifdebug
    \setcounter{page}{1}
    \pagenumbering{arabic}
    \afterpage{\cfoot{\thepage}}
\fi

\section{Introduction}

Online social network play an important role in today's life, which is, to a great
extent, due to the fact that, in the absence of the objective means for opinion
evaluation, people tend to evaluate their opinions by comparison with the opinions
of others~\cite{festinger1954theory}. Thus, social networks impact the opinion
formation process in the society. Clearly, the society would benefit from this
process' being natural and fair, with good ideas spreading and bad ideas disappearing.
However, viral marketing experts may be interested in affecting the opinion
formation process, having the goal of driving the opinion distribution to a certain
business-imposed objective. One popular way of affecting (or controlling) the opinion
formation process is influence maximization~\cite{domingos2001mining}, whose central
idea is to affect the opinions of a limited number of users in the network with the
goal of maximizing the subsequent spread of ``right'' opinions from these users
throughout the network, or, more generally, shifting the opinion distribution towards
a desired state. Naturally, the society would benefit from having a mechanism that
would prevent such potentially malicious interventions into the opinion formation
process in social networks. Our work is dedicated to the design of one such
mechanism---an edge recommendation algorithm that disables the effect of the attempts
to control the opinion distribution in an online social network through user influence.

In this work, we assume that the user opinions $x(t)$ are formed in the network following
the well-established DeGroot(-Abelson) opinion formation model~\cite{degroot1974reaching,abelson1964mathematical}
\begin{align*}
    x(t+1) = W x(t),\ x(t) \in [0, 1]^n,\ W \in [0, 1]^{n \times n},\ W\onebb = \onebb,
\end{align*}
where $t$ is time, and $W$ is a row-stochastic interpersonal appraisal matrix---playing
the role of the adjacency
matrix of a directed social network---whose element $w_{ij}$ measures the relative extent
to which user $i$ values the opinion of user $j$ (see Fig.~\ref{fig:diver-by-example}a).
According to this model, users form
their opinions via weighted averaging of their own opinions with those of their
neighbors in the network. The model's rationale---buttressed by social comparison theory~\cite{festinger1954theory},
cognitive dissonance theory~\cite{festinger1962theory}, and balance 
theory~\cite{heider1946attitudes,cartwright1956structural}---is that people act to achieve
balance with other group members or, alternatively, to relieve psychological discomfort
from disagreement with others. Its well-known that, in a long term, in a ``well-connected'' social network, the opinions of all users approach the same \emph{asymptotic consensus value}
$$
    \lim\limits_{t \to \infty}{x_i(t)} = \acv = \pi^\trans x(0),
$$
being a sum of the initial opinions $x_i(0)$ of all the users, weighted by the users'
eigenvector centralities $\pi_i$. While in real-world situations, people do not always
agree upon the same opinion---in contrast to how it is prescribed by DeGroot model---$\acv$
still can be viewed as the value to which the opinions of all the network users are
attracted, which makes this value a lucrative target for influence.

\begin{figure*}[ht]
    \centering
    \includegraphics[width=0.9\textwidth]{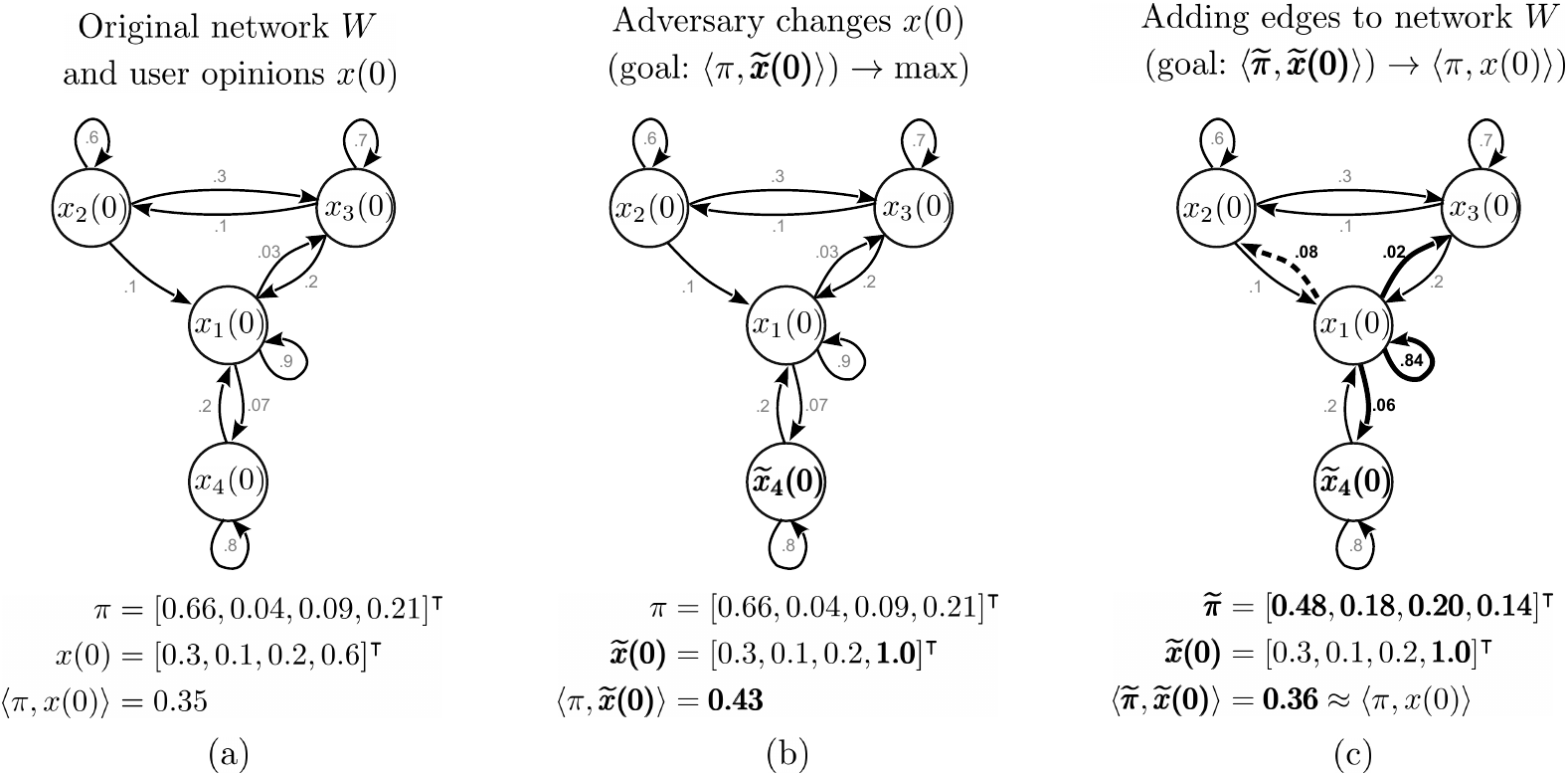}
    \caption{
        The adversary influences the users' opinions, $x(0) \to \xnew(0)$, increasing
        the asymptotic consensus value $\acv \to \acvnewx$.
        The network responds by adding edges, reducing the asymptotic consensus value
        $\acvnewxpi$, making it closer to the original $\acv$.
    }
    \label{fig:diver-by-example}
\end{figure*}
We assume that there is an adversary---an external party whose goal is to maximize the
asymptotic consensus value $\acv$. To that end, the adversary influences a
limited number of users in the network, changing their initial opinions and, thereby,
changing both the initial opinion distribution $x(0) \to \xnew(0)$ as well as
the asymptotic consensus value $\acv \to \acvnewx$, as shown in Fig.~\ref{fig:diver-by-example}b.
Our goal is to respond to this attack, and restore the asymptotic consensus value to
its original state \acv. However, we cannot directly influence the social network's
users' opinions; the only legitimate opinion control tool available to us is
\emph{edge recommendation}. We add a limited number of edges, thereby, changing the
distribution of eigenvector centralities $\pi \to \pinew$ and restoring the original
asymptotic consensus value, $\acvnewx \to \acvnewxpi \approx \acv$, as shown in
Fig.~\ref{fig:diver-by-example}c.

The \emph{central goal} of this work is to design a scalable algorithm that---under the
above described attack upon the opinion distribution---would identify the edges
whose addition to a social network would efficiently drive the asymptotic consensus value
to its state prior to the attack, disabling the latter's impact. Our \emph{specific contributions}
are:

\noindent \hspace{0.02in} \bull
We have defined \diver---a new problem of disabling external influence in a social
network via edge recommendation---and proven its NP-hardness.

\noindent \hspace{0.02in} \bull
We have provided novel perturbation analysis, having established how the network
        nodes' eigenvector centralities changes when a single edge is added to the network. This
        analysis has led to the definition of an edge score $f_\pi(i, j)$ that quantifies
        the potential impact of addition of directed edge $(i, j)$ to the network upon
        \diver's objective.

\noindent \hspace{0.02in} \bull
We have shown how to estimate edge scores $f_\pi$ in pseudo-constant time in networks
with skewed eigenvector centrality distribution, such as scale-free networks.
    
\noindent \hspace{0.02in} \bull
We have provided a pseudo-linear-time heuristic for \diver relying on edge
        scores $f_\pi$, and experimentally confirmed its effectiveness.

The paper is organized as follows.
Preliminaries and notation are provided in Sec.~\ref{sec:preliminaries}.
In Sec.~\ref{sec:problems-statement-and-hardness}, we formally define our
problem---\diver---and prove its NP-hardness.
A brief survey of literature on extremal network design and centrality perturbation
theory is given in Sec.~\ref{sec:background-work}.
Our main analyses and algorithms are developed in Sec.~\ref{sec:edge-addition}.
In particular, Sec.~\ref{sec:general-approach} provides an informal overview of our
approach towards solving \diver.
In Sec.~\ref{sec:edge-source-selection}, we address the problem of choosing a small
number of candidate edges out of the quadratic total number of candidates.
Sec.~\ref{sec:eigencentrality-vs-single-edge-perturbation} provides the analysis of
the impact of a single edge's addition upon the eigenvector centrality distribution.
Sec.\ref{sec:fpi} and Sec.~\ref{sec:fpi-efficient-comp} address the definition of
edge scores and their efficient computation, respectively.
The edge selection heuristic for \diver along with its time complexity are stated
in Sec.~\ref{sec:solving-diver}.
We conclude with experimental results in Sec.~\ref{sec:experiments}, and discussion
in Sec.~\ref{sec:discussion-and-future}.

\section{Preliminaries}
\label{sec:preliminaries}

We are given a sparse directed strongly connected aperiodic social network
$G(V, E)$, $|V| = n$, $|E| = \bigoh(n)$, having row-stochastic adjacency matrix
$W \in [0, 1]^{n \times n}$, $W \onebb = \onebb$---also known as the interpersonal
appraisal matrix---whose entry $w_{ij} \in [0, 1]$ reflects the relative extent to
which user $i$ takes into account the opinion of user $j$ while forming his or her
opinion. Aperiodicity can be replaced by the requirement of the network's having at
least one self-loop with a non-zero weight, which translates into a natural requirement
of having at least one user who does not completely disregard his or her own opinion
in the process of new opinion formation.

User opinions $x(t) \in [0, 1]^n$ at time
$t = 0, 1, 2, \dots$ are continuous, indicating an attitude towards a particular issue.
Given the initial user opinions $x(0) \in [0, 1]^n$, the opinions $x(t)$
evolve in discrete time as $x(t + 1) = W x(t)$, with each user's locally averaging
the opinions of all the users in his or her out-neighborhood, including the user's
own opinion.

In the remainder of the paper, we will mostly work with the initial
opinions $x(0)$, so we will use notation $x = x(0)$.

\begin{table}[t]
    \small
    \centering
    \begin{tabular}{|c|l|}
        \hline
        $\onebb$ & vector of all ones\\ \hline
        $\diag(v)$ & diagonal matrix with $v$ on the main diagonal\\ \hline
        $I$ & identity matrix\\ \hline
        $e_i$ & $i$'th column of the identity matrix\\ \hline
        $x(t)$ & (unaltered) user opinions at time $t$\\ \hline
        $x$ & $x(0)$\\ \hline
        $\xnew$ & altered user opinions at time $t = 0$\\\hline
        $n$ & number of nodes in the network\\ \hline
        $W$ & network's row-stochastic adjacency matrix\\ \hline
        $\Wnew$ & altered network' row-stochastic adjacency matrix\\ \hline
        $\edgeweight{ij}$ & weight of added directed edge $(i, j)$\\ \hline
        $\pi$ ($\pinew$) & $\ell_1$-normalized left dominant eigenvector of $W$ (of $\Wnew$)\\ \hline
        $m_{ij}$ & mean first passage time from $i$ to $j$ in chain $W$\\ \hline
    \end{tabular}
    \caption{Notation summary}
\label{tbl:notation}
\end{table}

Due to strong connectivity and aperiodicity of $G$, the opinion formation
process asymptotically converges, with
$\lim_{t \to \infty}{x(t)} = \acvs \onebb$,
where $\pi^\trans W = \pi^\trans$, $\|\pi\|_1 = 1$, so $\pi$ is the $\ell_1$-normalized
dominant left eigenvector of $W$. We refer to \acvs as the \emph{asymptotic consensus
value}---the opinion all the users are attracted to and asymptotically agree upon
under DeGroot model. By definition, $\pi$ is also a vector of eigenvector centralities
of the network's nodes, and can also be viewed as the users' no-teleportation PageRank
scores or the stationary distribution of the ergodic Markov chain with state transition
matrix $W^\trans$. Due to the latter, we may refer to $W$ as a Markov chain, and are
interested in the following properties of $W$ if viewed as such.

\begin{definition}[First passage time]
    The first passage time $T_{ij}$ from state $i$ to state $j$ of Markov chain $W$
    is a random variable describing the number of steps it takes for the chain started
    at state $i$ to reach state $j$. $T_{ii}$ is the first return time.
\end{definition}

\begin{definition}[Mean first passage time]
    The mean first passage time (MFPT) $m_{ij}$ from state $i$ to state $j$ of Markov
    chain $W$ is the expected first passage time $\expect{T_{ij} | \text{started at $i$}}$
    through state $j$ when the chain started at state $i$. $m_{ii}$ is the mean first return
    time (MFRT).
\end{definition}

The following two theorems immediately follow from Theorems 4.4.4 and 4.4.5 of Kemeny and Snell~\cite{kemeny1976finite}, respectively; regularity of the Markov chain $W$ in the
original theorems translates into our requirements of aperiodicity and strong connectivity
of the network with adjacency matrix $W$.

\begin{theorem}[Connection between MFRT and $\pi$]
    For any state $i$ of Markov chain $W$ with an aperiodic strongly
    connected network, $m_{ii} = 1 / \pi_i$.
    \label{thm:mfrt-vs-pi}
\end{theorem}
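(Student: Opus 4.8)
The plan is to reduce Theorem~\ref{thm:mfrt-vs-pi} to the classical stationary-distribution characterization of mean first return times in an ergodic finite Markov chain, taking care that our conventions (row-stochastic $W$, left eigenvector $\pi$) match those used in the cited source. Under strong connectivity and aperiodicity of $G$, the Markov chain with transition matrix $W$ (states $=$ nodes of $G$, transition from $i$ to $j$ with probability $w_{ij}$) is irreducible and aperiodic, hence ergodic; this is exactly the notion of a \emph{regular} chain in Kemeny and Snell~\cite{kemeny1976finite}. Therefore the chain has a unique stationary distribution, and the row vector $\pi$ with $\pi^\trans W = \pi^\trans$, $\|\pi\|_1 = 1$, $\pi > 0$, is precisely that stationary distribution.

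The core step is to invoke Theorem 4.4.4 of \cite{kemeny1976finite}, which states that for a regular chain the mean first return time to state $i$ equals the reciprocal of the $i$-th stationary probability. First I would make the identification above explicit so that "mean first return time'' in their statement coincides with our $m_{ii} = \expect{T_{ii}\mid\text{started at }i}$ and their stationary probability coincides with our $\pi_i$. Then the conclusion $m_{ii} = 1/\pi_i$ is immediate. For completeness I would also sketch the one-line intuition behind the classical fact: by the ergodic theorem the long-run fraction of time the chain spends in state $i$ is $\pi_i$ almost surely, while by the renewal-reward argument applied to the successive returns to $i$ this same fraction equals $1/\expect{T_{ii}}$; equating the two gives $m_{ii} = 1/\pi_i$.

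The only real obstacle is bookkeeping rather than mathematics: one must check that aperiodicity (or, as noted in the preliminaries, the presence of at least one positive-weight self-loop) together with strong connectivity is genuinely equivalent to the regularity hypothesis in \cite[Thm.~4.4.4]{kemeny1976finite}, and that no distinction between "first passage'' and "first return'' or between left/right stochasticity has been dropped. Since the paper's preamble already records exactly this translation of hypotheses, the proof reduces to a direct citation once these identifications are spelled out.
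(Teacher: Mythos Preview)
Your proposal is correct and matches the paper's approach exactly: the paper does not prove this theorem either but states that it ``immediately follows from Theorem~4.4.4 of Kemeny and Snell~\cite{kemeny1976finite},'' with regularity of the chain translating into aperiodicity and strong connectivity of the network. Your additional renewal-reward sketch is a nice bonus but goes beyond what the paper itself provides.
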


\begin{theorem}[MFPT one-hop conditioning]
    For any states $i$ and $j$ of Markov chain $W$ with an aperiodic strongly
    connected network, $m_{ij} = 1 + \sum_{k \neq j}{w_{ik}m_{kj}}$.
    \label{thm:mfpt-one-hop-cond}
\end{theorem}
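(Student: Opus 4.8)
The plan is to prove the identity by a first-step (one-hop) conditioning argument on the Markov chain $W$, using only the time-homogeneous Markov property and the law of total expectation. First I would record that, since the network is finite and strongly connected, the chain $W$ is irreducible, and being finite it is positive recurrent; hence every first passage time $T_{ij}$ has finite mean $m_{ij}$. This finiteness is exactly what later licenses splitting an expectation into a finite weighted sum of finite terms without running into $\infty-\infty$ issues.

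Next, fix states $i$ and $j$ and let $X_1$ denote the state of the chain one step after starting from $i$, so that $\prob{X_1 = k} = w_{ik}$. Condition on $X_1$. If $X_1 = j$, the chain has already hit $j$, so $T_{ij} = 1$ on that event. If $X_1 = k$ with $k \neq j$, then by the Markov property at time $1$ the chain, now sitting at $k$, needs a further number of steps to reach $j$ that is distributed exactly as $T_{kj}$; hence $\expect{T_{ij} \mid X_1 = k} = 1 + m_{kj}$. The law of total expectation then gives
\begin{align*}
    m_{ij} &= w_{ij}\cdot 1 + \sum_{k \neq j} w_{ik}\bigl(1 + m_{kj}\bigr) \\
           &= \sum_{k} w_{ik} \;+\; \sum_{k \neq j} w_{ik}\, m_{kj}
            = 1 + \sum_{k \neq j} w_{ik}\, m_{kj},
\end{align*}
where the last equality uses row-stochasticity $\sum_{k} w_{ik} = 1$. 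The same computation holds verbatim when $i = j$: the $k = j$ term is then the event ``return to $j$ in one step'', and the formula reduces to the one-hop recursion for the mean first return time.

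The one genuinely delicate step is the claim $\expect{T_{ij} \mid X_1 = k} = 1 + m_{kj}$ — that conditioning on the identity of the first visited state does not distort the remaining passage time — and this is precisely where the (time-homogeneous) Markov property is invoked; it is the step I would spell out most carefully, the rest being bookkeeping with the rows of $W$. A shortcut is to simply cite Theorems~4.4.4 and~4.4.5 of Kemeny and Snell, as the surrounding text does; but the first-step argument above is self-contained and makes transparent that strong connectivity (which guarantees finiteness of all $m_{ij}$) is the hypothesis actually needed here, with aperiodicity entering only through the companion statement $m_{ii} = 1/\pi_i$.
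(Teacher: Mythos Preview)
Your proof is correct. The paper itself does not give a proof of this statement but simply records it as an immediate consequence of Theorem~4.4.5 in Kemeny and Snell~\cite{kemeny1976finite}; your first-step conditioning argument is the standard self-contained derivation and is precisely what underlies that reference, so there is nothing to correct or add.
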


\section{Problem's Statement and Hardness}
\label{sec:problems-statement-and-hardness}

Given a social network, at each point it time, we can observe its users' opinions.
We assume that, at some time point, an external adversary makes
an influence maximization attempt by targeting several users and changing their opinions,
with the goal of, w.l.o.g., maximizing the asymptotic consensus value.
Such influence attempts can be detected using opinion dynamics-aware anomaly detection
techniques~\cite{amelkin2017distance}. Alternatively, we can track whether the
current changes in user opinions follow a pattern prescribed by the solution of an
influence maximization problem\footnote{For DeGroot opinion dynamics model,
where the expression $\acvs \onebb$ for the asymptotic opinion distribution is linear
in $x$, influence maximization can be performed by solving an instance of a 0-1 knapsack.
The latter can be efficiently performed via dynamic programming in pseudo-linear time.}.

Having detected an external influence attempt, we are given the opinion
distribution $x \in [0, 1]^n$ preceding the attack as well as the externally altered
opinion distribution $\xnew \in [0, 1]^n$. As a result of the attack, the original
asymptotic consensus value \acvs has changed to \acvnewxs, where $\pi$, as before, is
the network's eigencentrality vector. Our goal is to add a limited number of edges to the
network and, thereby, change $\pi$ in such a way, that the resulting asymptotic consensus
value \acvnewxpis is close to its state \acvs before the attack. Formally, the problem of
disabling external influence via edge recommendation is defined as follows:
\begin{align}
    \boxed{
    \diver(W, k, x, \xnew) = {\argmin}_{\Wnew}{
        |
            \langle \pinew(\Wnew), \xnew \rangle -
            \langle \pi, x \rangle
        |
    },
    }
    \label{eq:opt-general}
\end{align}
where the perturbed row-stochastic adjacency matrix $\Wnew$ differs from $W$
by $k$ new edges whose weights $\edgeweight{ij}$ we cannot control (since these
weights correspond to the users' interpersonal appraisals), yet, can estimate
and, hence, assume the knowledge of. For $k = 1$, after addition of directed edge
$(r, c)$ with weight $\theta_{rc}$, $\Wnew$ will look as follows:
\begin{align}
    \Wnew = W - \edgeweight{rc} \diag(e_r) W + \edgeweight{rc} e_r e_c^\trans.
    \label{eq:single-edge-perturbation}
\end{align}
\vsc\vsbb
\begin{figure}[h!]
    \centering
    \includegraphics[width=1.25in]{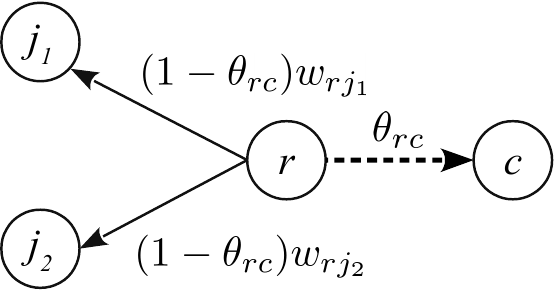}
    \label{fig:single-edge-perturbation}
\end{figure}
\vsa

In this work, we focus on deterministically adding edges with predefined weights
to the network, but our framework can be easily extended to the non-deterministic,
case with edge acceptance probabilities.
    	    
Complexity of \diver comes along two dimensions---the necessity to search for the
best subset of edges delivering the minimum of \diver's objective, and assessing
the impact of a given subset of edges upon the objective. While the latter can be
done in polynomial time\footnote{It narrows down to recomputing the dominant left
eigenvector of sparse $\Wnew$ perturbed with new edges, which can be done in
pseudo-linear time using the power method, where ``pseudo-'' reflects the dependency
of the power method's convergence rate upon the matrix' spectral gap.}, the edge
subset search cannot and is the cause of NP-hardness. In the following
Theorem~\ref{thm:hardness}, we formally show that, even for the case of an
undirected network, $\diver(W, k, x, \xnew)$ is NP-hard.

\begin{theorem}
    The problem $\diver(W, k, x, \xnew)$ of disabling external influence
    via $k$ edges' recommendation is NP-hard for undirected networks.
\label{thm:hardness}
\end{theorem}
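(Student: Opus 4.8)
The plan is to give a polynomial-time Karp reduction from \kssp, the $0$--$1$ exact $k$-element subset-sum problem: given positive integers $a_1,\dots,a_m$, a positive integer target $t$, and $k\in\{1,\dots,m-1\}$, decide whether some $S\subseteq\{1,\dots,m\}$ with $|S|=k$ satisfies $\sum_{i\in S}a_i=t$. The single structural fact I would lean on is that for an undirected network the eigenvector-centrality vector is proportional to weighted degree: writing $A$ for the symmetric weight matrix, $d_i=\sum_j A_{ij}$, $D=\sum_i d_i$, we have $W=\diag(d)^{-1}A$ and $\pi_i=d_i/D$. Hence, in the natural undirected instantiation of the model, adding an undirected edge $\{u,v\}$ of weight $a$ simply raises $d_u$ and $d_v$ by $a$ and $D$ by $2a$, which gives an \emph{exactly} linear handle on how \diver's objective moves. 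I will arrange the instance so that the only edges the network can ever add are ``item edges'', one per integer $a_i$, and so that adding the item edge for $i$ shifts the consensus value by an amount proportional to $a_i$.

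The construction: build $G$ on the node set $\{h\}\cup\{v_1,\dots,v_m\}\cup\{b\}$; make $\{v_1,\dots,v_m,b\}$ a clique, make $b$ adjacent to $h$, leave $h$ non-adjacent to every $v_i$ (aperiodicity holds since the clique is non-bipartite, or one may add a small self-loop at $b$), and give all these existing edges arbitrary positive rational weights, producing a connected aperiodic $W$ with total degree $D$ and centrality $\pi$. The point of this graph is that its \emph{only} non-edges are the $m$ pairs $\{h,v_i\}$, so every feasible perturbation $\Wnew$ is obtained by choosing $S\subseteq\{1,\dots,m\}$ with $|S|=k$ and adding the edges $\{h,v_i\}_{i\in S}$; I fix the prescribed (and, for the solver, uncontrollable) weight of every such potential edge to a common small positive rational $a$, so that the perturbed total degree is the constant $D+2ka$ regardless of $S$. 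I then set $\xnew_{v_i}=c\,a_i$ and $\xnew_u=0$ for every other node $u$, and $x=\gamma\onebb$ with $\gamma=\frac{N_0+ac\,t}{D+2ka}$ and $N_0=\sum_i d_i\xnew_i$, where $c>0$ is a rational chosen small enough that both $\xnew\in[0,1]^n$ and $\gamma\in[0,1]$; then $x$ is feasible and $\langle\pi,x\rangle=\gamma$ since $\|\pi\|_1=1$. The construction is clearly polynomial-time.

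For correctness, given $\Wnew$ from a size-$k$ set $S$ we have $\pinew_i=\dnew_i/(D+2ka)$ with $\dnew_{v_i}=d_{v_i}+a$ for $i\in S$, while the degree increment at $h$ is irrelevant because $\xnew_h=0$; hence
\begin{align*}
\bigl|\langle\pinew,\xnew\rangle-\langle\pi,x\rangle\bigr|
&=\frac{a}{D+2ka}\Bigl|c\!\sum_{i\in S}a_i-c\,t\Bigr|\\
&=\frac{ac}{D+2ka}\Bigl|\sum_{i\in S}a_i-t\Bigr|.
\end{align*}
Since the $a_i$ and $t$ are integers, this is $0$ when $\sum_{i\in S}a_i=t$ and at least $\delta:=ac/(D+2ka)>0$ otherwise. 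Thus the optimum of $\diver(W,k,x,\xnew)$ equals $0$ precisely when the \kssp instance is a yes-instance and is $\geq\delta$ otherwise; reading off from an optimal solution whether the optimum is $0$ is immediate, so a polynomial algorithm for \diver would decide \kssp, proving NP-hardness already on undirected networks.

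The one genuinely delicate step, and the thing to get exactly right, is forcing the set of addable edges to be \emph{exactly} the item edges: if any other non-edge survived --- a pair $\{v_i,v_j\}$, or a second hub-like node --- the adversary could add a single edge accumulating two of the $a_i$ at once, or count some $a_i$ twice, which would break the $0$--$1$, exactly-$k$ structure that \kssp supplies; completing the rest of $G$ into a clique removes all such spurious candidates, at the harmless cost of density (density plays no role in the hardness claim). The only secondary point needing care is noting that assigning the single common weight $a$ to the item edges is exactly what makes the perturbed denominator $D+2ka$ independent of $S$, which is what keeps the reduction exact rather than merely approximate; bounding $c$ so that $x,\xnew\in[0,1]^n$ is routine.
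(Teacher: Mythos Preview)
Your reduction is correct and rests on exactly the same structural observation the paper exploits: in an undirected network with $W=\diag(d)^{-1}A$, the eigencentrality is $\pi=d/D$, so adding an undirected edge moves the objective by an amount one can write down explicitly and linearly. Both proofs then manufacture a graph whose \emph{only} missing edges are the ``item'' edges, encode the subset-sum values into $\xnew$, and arrange $x$ so that the objective collapses to $|\sum_{i\in S}a_i-t|$ up to a positive constant.

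The differences are purely in the gadget. The paper removes a perfect matching $\{(2i-1,2i)\}$ from a $2n$-clique and sets $\xnew=z\otimes\onebb_2$, so that adding the $i$-th matching edge bumps the degree at \emph{both} endpoints carrying $z_i$. You instead hang a hub $h$ off a clique $\{v_1,\dots,v_m,b\}$ via the single bridge $\{h,b\}$; the item edges are $\{h,v_i\}$, and by setting $\xnew_h=0$ you make the (large) degree change at $h$ invisible in the objective, so only the increment at $v_i$ counts. Your version uses $m+2$ nodes rather than $2m$, allows arbitrary positive weights on existing edges, and fixes a common weight $a$ on the addable edges to keep the denominator $D+2ka$ independent of $S$; the paper achieves the same constancy by using unit weights throughout. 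Both routes yield the same conclusion; yours is slightly more compact, the paper's slightly more symmetric. One cosmetic point: you take the NP-hardness of \kssp (with $k$ part of the input) as given, whereas the paper spells out the reduction $\ssp\propto\kssp$ explicitly; this is standard and does not affect correctness.
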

\begin{proof}
    In the proof, we will show that $\diver(W, k, x, \xnew)$ applied to
    a certain simple undirected network can be used as a solver for the classic
    NP-complete subset sum problem. For readability, we will abuse notation
    and assume that the value of $\diver(W, k, x, \xnew)$ is the minimum itself,
    rather than the corresponding $\argmin$.
    
    \emph{1) Subset sum problems:}
    The subset sum problem $\ssp(\{z_i\}, s)$
    is a classic NP-complete problem of deciding whether a given finite set
    $\{z_i\} \subset \mathbb{Z}^n$ of integers has a non-empty subset with a predefined
    sum $s \in \mathbb{Z}$. (\ssp appears on Karp's list of NP-complete
    problems~\cite[p.95]{karp1972reducibility} under the name \knapsack).
    A related problem is the problem $\kssp(\{z_i\}, k, s)$ of deciding whether,
    among a finite number of bounded reals $z_i \in [0, 1]$, there is a non-empty
    subset of $k$ elements summing up to a given value $s \in [0, 1]$. Reduction
    $\ssp \propto \kssp$ is as follows:
    \begin{align*}
        \ssp(&\{z_i\}, s) = {\bigvee}_{k=1}^{n}{\kssp(\{z''_i\}, k, s'')},\\[0.04in]
        z'_i &= z_i + L \in \mathbb{Z}_+,\ s' = s + kL \in \mathbb{Z}_+,\\
        L &= |\min\{0, \min\{s, \min{z_i}\}\}|\\
        z''_i &= z'_i / M \in [0, 1],\ s''_i = (s + kL) / M \in [0, 1],\\
        M &= \max\{s', \max{z'_i}\}.
    \end{align*}
    
    \emph{2) Undirected uniformly weighted networks and their eigenvector centrality:}
        Let $W_{01} \in \{0, 1\}^{n \times n}$ be the binary adjacency matrix of an undirected network, $d = W_{01} \onebb$ be a vector of node degrees,
        and $D = \diag(d)$. Further, let $W = D^{-1} W_{01}$. We say that $W$ is the
        adjacency matrix of an \emph{undirected uniformly weighted network} (since,
        all the edges within the same neighborhood are weighted equally). Notice that
        $W$ is row-stochastic, as $W \onebb = D^{-1} W_{01} \onebb = D^{-1} d = \onebb$.
        
        Since $d^\trans W = d^\trans D^{-1} W_{01} = \onebb^\trans W_{01} = d^\trans$,
        vector $\pi = d / \|d\|_1 = d / (2|E|) = d / (2m)$ is the $\ell_1$-normalized dominant left
        eigenvector--or, eigenvector centrality---of $W$. If the underlying unweighted
        network $W_{01}$ is perturbed with $k$ undirected edges $(i, j) \in S$, $|S| = k$,
        then the eigenvector centrality of the corresponding weighted network becomes
        \begin{align}
            \pinew
                &= \frac{1}{2(m + k)} \left(
                    d + {\sum}_{(i, j) \in S}{(e_i + e_j)}
                \right) \notag \\
                &= \frac{1}{m + k}\left(
                    m\pi + {\sum}_{(i, j) \in S}{(e_i + e_j) / 2}
                \right),
            \label{eq:uw-pinew}
        \end{align}
        where $e_i$ is the $i$'th column of the identity matrix.

    \emph{3) \diver in undirected uniformly weighted networks:}
    If network $W$ is undirected uniformly weighted and, thus,
    defined by its binary adjacency matrix $W_{01}$, then \diver's objective function
    over such $W$ can be rewritten as follows:
    \begin{align*}
        f(\Wbinnew)
            &= |\langle \pinew, \xnew \rangle - \langle \pi, x \rangle| =
            (\text{from~(\ref{eq:uw-pinew})}) =\\
            &= \left|
                \frac{m}{m + k} \langle \pi, \xnew \rangle +
                \frac{1}{2(m+k)} \langle \sum\limits_{(i, j) \in S}{(e_i + e_j)}, \xnew \rangle -
                \langle \pi, x \rangle
                \right|\\
            &= \frac{1}{m + k}
                \left|
                \langle \sum\limits_{(i, j) \in S}{(e_i + e_j)}, \xnew / 2 \rangle
                -
                \langle \pi, (m+k)x - m \xnew \rangle
                \right|\\
            &= a(k)
                \left|
                    \langle \sum\limits_{(i, j) \in S}{(e_i + e_j)}, \xnew / 2 \rangle
                    -
                    b(k, x, \xnew)
                \right|,
    \end{align*}
    where $b(k, x, \xnew) = \langle \pi, (m+k)x - m\xnew \rangle$.
    Since $k$, and, consequently, $a(k)$ are constant, minimization of $f(\Wbinnew)$
    is equivalent to minimization of
    \begin{align}
        f'(\Wbinnew) =
            \left|
                \langle \sum\limits_{(i, j) \in S}{(e_i + e_j)}, \xnew / 2 \rangle -
                b(k, x, \xnew)
            \right|.
        \label{eq:uw-obj}
    \end{align}

    \emph{4) Reduction $\kssp \propto \diver$:}
    Suppose we are given an instance $\kssp(z, k, s)$, with $z \in [0, 1]^n$,
    $k \in \mathbb{N}$, and $s \in [0, 1]$. In what follows, we will show that
    the solution to $\kssp(z, k, s)$ is obtained by checking whether
    \begin{align}
        \min_{\Wkcnew}{
            \diver\left(
                \Wkc,
                k,
                \frac{s\onebb + m (z \otimes \onebb_2)}{m + k},
                z \otimes \onebb_2
            \right)
        } = 0,
        \label{eq:kssp-diver-reduction}
    \end{align}
    where $\Wkc$ is the adjacency matrix of an undirected uniformly weighted $2n$-clique from
    which edges $C = \{ (2i - 1, 2i) \mid i = 1, \dots, n\}$ have been removed (see Fig.~\ref{fig:wkc}), $\Wkcnew$ is $\Wkc$ perturbed with $k$ edges $S = \{(2i - 1, 2i)\} \subseteq C$,
    $\onebb = \onebb_{2n}$, and $\otimes$ is Kronecker product.
    \begin{figure}[th]
        \centering
        \includegraphics[width=1.5in]{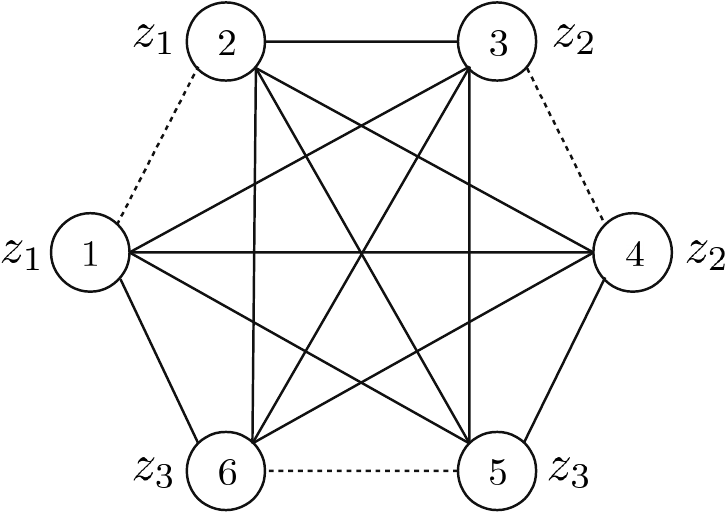}
        \caption{
            Network $\Wkc$ for $n = 3$; absent edges $C$ are displayed dashed. Node states
            $z \otimes \onebb_2$, used in the reduction, are displayed next to the nodes.
        }
        \label{fig:wkc}
    \end{figure}
    
    It is easy to show that the proposed input to \diver is indeed legal ($\Wkc$
    is row-stochastic matrix of a uniformly weighted undirected strongly connected
    aperiodic network; and $\xnew = z \otimes \onebb_2$ and 
    $x = \frac{s\onebb + m (z \otimes \onebb_2)}{m + k}$ are legal 
    vectors of altered and original user opinions, respectively.
    
    Let us show what \diver transforms into under the proposed input of~(\ref{fig:wkc}). First,
    we notice that, for $b(k, x, \xnew) = \langle \pi, (m+k)x - m\xnew \rangle$ of~(\ref{eq:uw-obj}),
    the following holds
    \begin{align*}
        b&\left(
            k,
            \frac{s\onebb + m \xnew}{m + k},
            \xnew
        \right) =
        \langle
        \pi, s\onebb + m \xnew - m \xnew
        \rangle = s \langle \pi, \onebb \rangle = s.
    \end{align*}
    Then, \diver's objective~(\ref{eq:uw-obj}) under input~(\ref{eq:kssp-diver-reduction}) will look as
    \begin{align*}
        f'(\Wkcnew) = 
            \left|
                \left\langle \sum\limits_{(i, j) \in S}{(e_i + e_j)}, \frac{z \otimes \onebb_2}{2} \right\rangle -
                s
            \right|
            = \left|
                \sum\limits_{\ell = 1}^{n}{y_\ell z_\ell} - s
            \right|,
    \end{align*}
    where $y_{\ell}$ are edge decision variables
    $$
    y_{\ell} = \begin{cases}
        1, & \text{if } (2\ell - 1, 2\ell) \in S,\\
        0, & \text{otherwise}.
    \end{cases}
    $$
    Thus, solving \diver via minimizing $f'(\Wkcnew)$, we look for a subset of
    $\{z_\ell\}$ of size $k$ summing up to $s$, which is exactly what \kssp is after,
    so $\kssp \propto \diver$.

    Parts 1) and 4) of the proof together establish $\ssp \propto \kssp \propto \diver$,
    so \diver is NP-hard.
    \qedbull
\end{proof}

\section{Background Work}
\label{sec:background-work}

\diver is, essentially, a problem of strategically modifying eigenvector centrality
$\pi \to \pinew$ of a directed weighted network via edge addition, with the goal of
optimizing the absolute value of a linear function of $\pinew$. While
this problem is new, there is a range of related problems---in extremal network design
as well as in the perturbation analysis of centrality measures and stationary distributions
of Markov chains---related to ours either in the nature of the objective being optimized
or the methods and analyses used. We survey several groups of these works in the following
subsections.

\subsection{Analytic Optimization of Network Topology}
\label{sec:analytic-opt-of-network-topology}

The first class of related works are the network design problems, where a network's
topology is altered to optimize some property of that network. Both the optimized property
and the methods involved in the solution are analytic (in contrast to combinatorial
goals and methods, reviewed separately).

\emph{Algebraic Connectivity:}
Ghosh and Boyd~\cite{ghosh2006growing} studied the problem of maximizing the
algebraic connectivity---the second smallest eigenvalue of the combinatorial
Laplacian $L$~\cite{merris1994laplacian}---of an undirected unweighted network
via edge addition. If $a_i$ is the $i$'th column
of the network's incidence matrix, then the optimization problem being addressed is
$$
    \lambda_2(L + \sum{x_i a_i a_i^\trans}) \to \max,\ x^\trans \onebb = const,\ x \in \{0, 1\}^n.
$$
The authors formulate the problem as a semidefinite program (SDP) via convex relaxation
($x \in [0, 1]^n$), which is feasible to solve for small networks. They also
provide a greedy perturbation heuristic that picks edges $(i, j)$ based on the largest
value of $(v_i - v_j)^2$---the squared difference of the Fiedler vector's components
corresponding to each edge's ends. The authors show that, in case of simple $\lambda_2$,
value $(v_i - v_j)^2$ gives the first-order approximation of the increase in $\lambda_2(L)$
if edge $(i, j)$ is added to the network. The heuristic outperforms the SDP solution in
experiments on synthetic data. The authors also derive bounds on algebraic connectivity
under single-edge perturbation. More recently, this approach has
been employed by Yu et al.~\cite{yu2015friend} for the design of an edge selection
heuristic that the authors have augmented with an extra objective---neighborhood
overlap-based user similarity (which likely correlates with edge acceptance
likelihood).

\emph{Spectral Radius:}
Van  Mieghem et al.~\cite{vanmieghem2011decreasing} study the problem of minimizing the
spectral radius of an undirected network via edge or node removal. They prove
NP-hardness of the problem, and show that the edge selection heuristic that picks edges
$(i, j)$ with the largest scores $v_i v_j$---where $v$ is the dominant
eigenvector---performs well in experiments. More recently, Saha et
al.~\cite{saha2015approximation} addressed the same problem of spectral radius minimization,
and designed a walk-based algorithm, relying on the link between the sum of powers of
eigenvalues of a network and the number of closed walks in it, and provided approximation
guarantees for them. Zhang et al.~\cite{zhang2015controlling} studied spectral radius
minimization for directed networks under SIR model, and provided an SDP/LP-based
solutions, having high polynomial time complexity.

\emph{Eigenvalues and Their Functions:}
Tong et al.~\cite{tong2012gelling} investigate how to optimize the diffusion
rate---expressed via the largest eigenvalue of the adjacency matrix---through
a directed strongly connected unweighted (see~\cite{chen2016eigen} for the weighted
case's treatment) network via edge addition or removal. Similarly to
Van  Mieghem et al.~\cite{vanmieghem2011decreasing}, the authors use first-order
perturbation theory in order to assess the effect of the deletion of $k$ edges
\begin{align*}
    \lambda_{max} - \widetilde{\lambda}_{max}
        = \sum{u_i v_j} / \langle u, v \rangle + \bigoh(k),
\end{align*}
where $u$ and $v$ and the left and right dominant eigenvectors of the adjacency matrix,
respectively. This analysis inspires an edge selection heuristic, with the quality
of edge $(i, j)$'s being defined as $u_i v_j$, similarly to $v_i v_j$ edge
score of~\cite{vanmieghem2011decreasing}. Le et al.~\cite{le2015met} extend this
result to the networks with small eigengaps. The key idea of their approach is tracking
multiple (instead of just the dominant) eigenvalues of the network.
Chan et al.~\cite{chan2014make} target optimization of natural connectivity---a network
robustness measure defined, roughly, as an average of exponentiated eigenvalues of the
adjacency matrix---of an undirected strongly connected network via the change of its
topology. For edge addition, they focus on a small number of candidate edges whose
both ends have high eigencentrality.

\emph{Other Objectives:}
The SDP-based approach of Ghosh and Boyd~\cite{ghosh2008minimizing} has been applied by
the same authors to minimization of the total effective resistance of an undirected
electric network via edge weight selection. Arrigo and Benzi~\cite{arrigo2016updating}
address the problem of optimizing the total communicability---the sum of the entries
in the exponential of the adjacency matrix---in an undirected connected network via
edge addition and removal. The authors use edge selection heuristics, favoring edges
between the nodes having high eigenvector centrality (for edge addition) or edges between
the nodes having a large sum of their degrees (for edge removal). Garimella et al.~\cite{garimella2017reducing} study an edge recommendation problem targeting reduction
of polarization in a directed unweighted network, where polarization is measured via
a random walk-based score. The edges are created between users ``holding opposing views''.
Similarly to~\cite{chan2014make,arrigo2016updating}, the authors use an edge-selection
heuristic that favors edges between high-degree nodes.

\subsection{Combinatorial Optimization of Network Topology}

These works address network design problems whose objectives or methods are of
combinatorial nature. A large portion of these
works are dedicated to direct information spread optimization in combinatorial
opinion dynamics models, in contrast to indirectly optimizing some analytic
feature of the network, such as the spectral radius of its adjacency matrix,
expected to facilitate or hinder information propagation.

\emph{Information Spread:}
Chaoji et al.~\cite{chaoji2012recommendations} look at a problem of maximizing the
size of the activated user set under the Independent Cascade-like opinion dynamics
model in an undirected network via edge addition. The authors prove NP-hardness of
the problem, apply continuous relaxation to gain submodularity of the objective,
and design a greedy cubic-time approximation algorithm for the relaxed problem.
Kuhlman et al.~\cite{kuhlman2013blocking} focus on general threshold-based propagation
models, and address the problem of minimizing the contagion spread via edge deletion
in a directed weighted network. The authors prove inapproximability of the problem,
and design a spread simulation-based heuristic, that proves to be effective in
experiments. The work of Khalil et al.~\cite{khalil2014scalable} is dedicated to
facilitating or hindering the spread of information under Linear Threshold (LT)
model via edge addition or deletion in a directed weighted network. The authors design an influence objective function
and prove its supermodularity. The latter property used together with sampling
of LT process realizations allows for the design of an efficient linear-time
algorithm for target edge selection.

\emph{Shortest Paths and Optimal Flows:}
Phillips~\cite{phillips1993network} studied the problem of minimizing a combinatorial
maximum flow / minimum cut in a network. Each capacitated edge has
a destruction cost, and the adversary needs to select a subset of edges to destroy,
constrained by the total edge destruction budget. The authors prove NP-hardness
of the problem, and design an FPTAS for the case of a planar network. Israeli and
Wood~\cite{israeli2002shortest} conduct a study of an NP-hard problem of maximizing
a single $s$-$t$ shortest path via edge removal in a directed network, formulated as
a mixed-integer program (MIP). Due to the prohibitive time complexity of a direct
solution of a MIP problem, the authors propose several decomposition techniques
to accelerate the computation under some assumptions on the edge removal delays.
Papagelis et al.~\cite{papagelis2011suggesting} address the problem of minimizing
the average all-pairs shortest path length in a connected undirected network via
edge addition, and propose a greedy algorithm and two heuristics. Their most
efficient algorithm has a quadratic time complexity. Ishakian et
al.~\cite{ishakian2012framework} define a general path-counting centrality measure
and study a problem of maximizing the centrality of a given node via edge addition
in a DAG. The authors use a quadratic-time greedy strategy
for picking edges providing the largest marginal increase of the objective.
Parotsidis et al.~\cite{parotsidis2016centrality} study minimization of the sum
of lengths of the shortest paths from a target node to all other nodes via link
recommendation to the target node in an undirected network. The problem is proven
to be NP-hard, and an efficient approximation algorithm is designed, employing
submodularity of the objective. A related problem of minimizing the maximal
shortest path length has been previously addressed by Perumal et al.~\cite{perumal2013minimizing};
another related problem of maximizing the coverage centrality---the number of
unique node pairs whose shortest paths pass through a given node---is addressed
by Medya et al.~\cite{medya2017maximizing}.

\subsection{Centrality Perturbation and Manipulation}

These works study either how eigenvector centrality or PageRank or the stationary
distribution of a Markov chain change when a network's structure is perturbed;
or how to strategically manipulate centrality by altering the network.

\emph{Strategic Centrality Manipulation:}
Avrachenkov and Litvak~\cite{avrachenkov2006effect} analyze to what extent a node can
improve its PageRank by creating new outgoing edges. The authors derive equalities that result
in a conclusion that the PageRank of a web-page cannot be considerably improved by
manipulating its outgoing edges. The authors also derive an optimal linking strategy, stating
that it is optimal for a web-page to have only one outgoing edge pointing to a web-page with the
shortest mean first passage time back to the original page. We can come to similar conclusions
for eigenvector centrality in an arbitrarily weighted network using
Theorem~\ref{thm:single-edge-perturbation} from 
Sec.~\ref{sec:eigencentrality-vs-single-edge-perturbation} of our work.
De Kerchove et al.~\cite{de2008maximizing} generalize the results of Avrachenkov and
Litvak~\cite{avrachenkov2006effect}, studying maximization of the sum of PageRanks of
a subset of nodes via adding outgoing edges to them.
Cs{\'a}ji et al.~\cite{csaji2014pagerank} study the problem---originally, posed
by Ishii and Tempo~\cite{ishii2009computing}---of optimizing the PageRank of a given node
via edge addition in a directed network. The authors formulate the optimization problem
as a Markov decision process and propose a (generally, not scalable) polynomial-time
algorithm for it. More recently, Ye et al.~\cite{ye2017modification} studied the problem
of reducing the social dominance of the central node in a star network in the context
of a Friedkin-Johnsen model defined for issue sequences~\cite{jia2015opinion} via
structural modifications of the network and, in particular, via edge addition. By
exploiting regularity of a star network's structure, the authors establish the conditions
under which the social dominance can shift from the center to one of the peripheral nodes.

\emph{Centrality Perturbation Analysis:}
Cho and Meyer~\cite{cho2000markov} provide coarse bounds for the stationary distribution
of a generally perturbed Markov chain expressed via MFPTs:
$$
    |\pi_i - \pinew_i| / \pi_i \leq \|E\|_{\infty} \max\limits_{i \neq j}{m_{ij}} / 2,
$$
where $E$ is an additive perturbation of the state transition matrix.
Chien et al.~\cite{chien2001towards} provide an efficient algorithm for incremental
computation of PageRank over an evolving edge-perturbed graph, with the analysis'
drawing upon the theory of Markov chains. The key idea of their algorithm is to
contract the network and localize its part where the nodes are likely to have changed
their PageRank scores under the perturbation.
Jeh and Widom~\cite{jeh2003scaling} study incremental computation of personalized PageRank.
Langville and Meyer~\cite{langville2006updating} provide exact equalities for the change
in the stationary distribution of a perturbed Markov chain using group inverses. They
address the problem of updating the
stationary distribution under multi-row perturbation via exact and approximate
aggregation, similarly to what Chien et al.~\cite{chien2001towards} did for PageRank.
Hunter~\cite{hunter2005stationary} addresses the same problem of establishing equalities
for the change in the stationary distribution, yet, provides an answer that does not
involve group inverses and, instead, uses mean first passage times in a Markov chain;
our perturbation analysis in Sec.~\ref{sec:eigencentrality-vs-single-edge-perturbation}
builds upon this result.
Como and Fagnani~\cite{como2015robustness} provide an upper bound on the perturbation
of the stationary distribution of a Markov chain in terms of the mixing time of the
chain as well as the entrance and escape likelihoods to and from the states with
perturbed out-neighborhoods.
Bahmani et al.~\cite{bahmani2012pagerank} address the problem of updating PageRank
algorithmically. The proposed node probing-based algorithms provide a close estimate
of the network's PageRank vector by crawling a small portion of the network.
More recently, Li et al.~\cite{li2015cheetah} and Chen and Tong~\cite{chen2017eigen}
addressed a general problem of updating eigenpairs of an evolving network. Chen and
Tong provide a linear-time algorithm for tracking top eigenpairs. Finally, there
are works on updating non-spectral centrality measures, such as betweenness~\cite{lee2012qube}
and closeness~\cite{sariyuce2013incremental}.

\section{Strategic Edge Addition to the Network}
\label{sec:edge-addition}

\subsection{Overview of the General Approach}
\label{sec:general-approach}

Since, according to Theorem~\ref{thm:hardness}, \diver optimization problem
\begin{align}
    \diver(W, k, x, \xnew) = {\argmin}_{\Wnew}{
        |
            \langle \pinew(\Wnew), \xnew \rangle -
            \langle \pi, x \rangle
        |
    },
    \tag{\ref{eq:opt-general}}
\end{align}
is NP-hard, we need to design a heuristic for it. 
Our general approach---formalized later in Sec.~\ref{sec:solving-diver}---is as follows.
We will assess candidate edges with respect to how much their addition to the network
can decrease term \acvnewxpis of (\ref{eq:opt-general}), and,
then, iteratively add the most promising edges to the network until we are satisfied
with the value of \diver's objective.

Thus, our foremost concerns now are the selection of a small number of candidate edges to assess
and the subsequent assessment of the potential impact of these edges' addition to the network upon the
network's eigenvector centrality. They are addressed in the following two sections.

\subsection{Selection of Candidate Edge Source Nodes}
\label{sec:edge-source-selection}

The general approach of Sec.~\ref{sec:general-approach} involves assessing candidate
edges individually. However, the number of absent edges in a sparse network is $\bigoh(n^2)$,
and inspecting all of them is unfeasible for large networks. Hence, we will focus on
a small number of candidate edges, outgoing from $n_{src} = const \ll n$ network nodes.
The latter implies that a small number of nodes are being the sources for most---or, at
least, a large number of---``good'' candidate edges in the network. Intuitively, the
nodes having the largest (eigenvector) centrality should be those edge sources; the
changes in their out-neighborhoods should have the largest impact upon the centrality
distribution in the network, as Fig.~\ref{fig:src-centrality-vs-acv-reduction} suggests.
This intuition will find formal support in Corollary~\ref{thm:small-nsrc} of Theorem~\ref{thm:single-edge-perturbation} in the following
Sec.~\ref{sec:eigencentrality-vs-single-edge-perturbation}.
\begin{figure}[ht]
    \centering
    \includegraphics[width=0.85\linewidth]{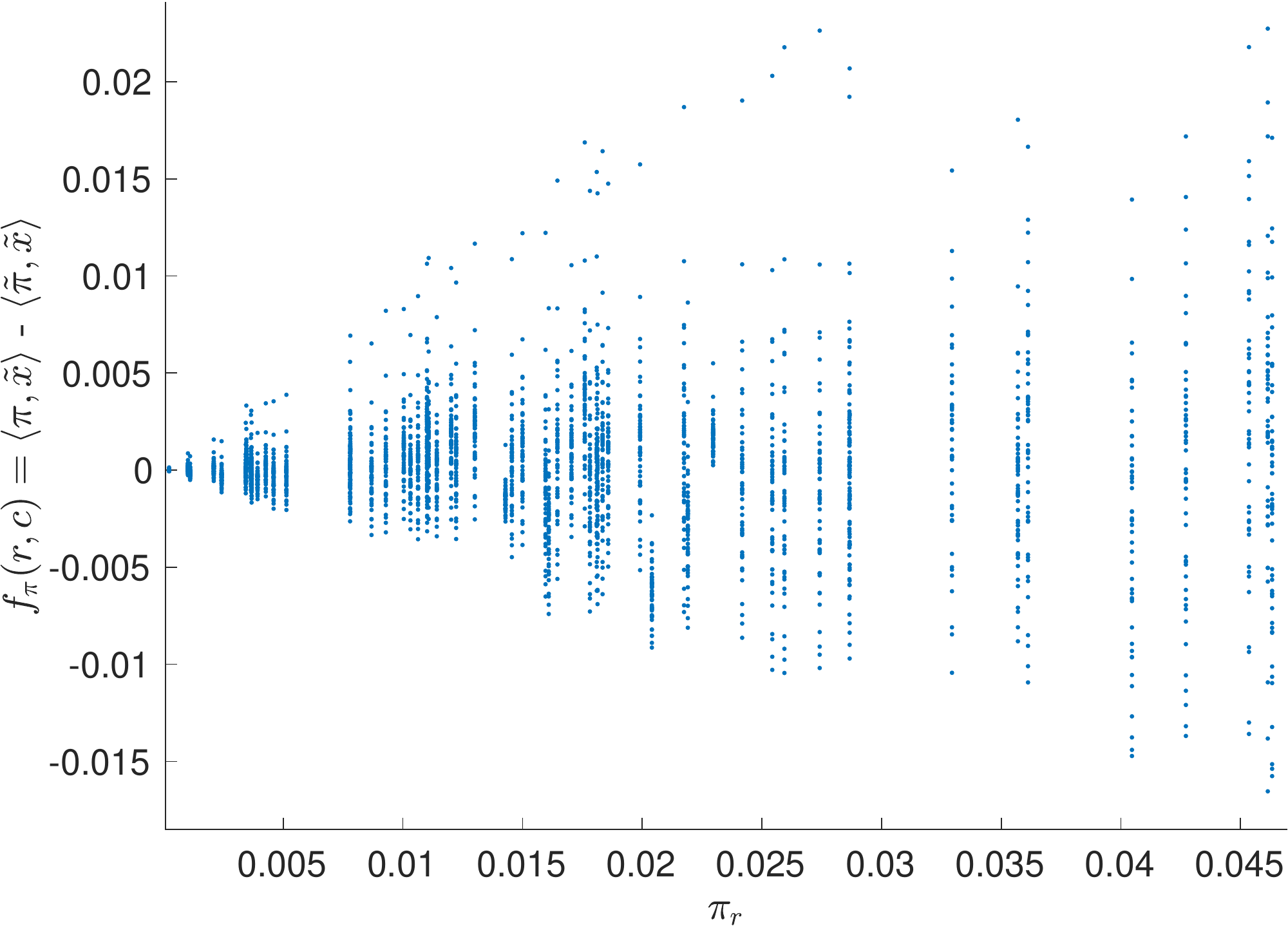}
    \caption{
        Dependence of the asymptotic consensus value reduction
        $f_\pi(r, c) = \langle \pi, \xnew \rangle - \langle \pinew, \xnew \rangle$ after
        addition of edge $(r, c)$, $\edgeweight{rc} = const$ to a scale-free network
        $(n = 100, \gamma = -2.5)$ on the edge source's eigenvector centrality $\pi_r$.
    }
    \label{fig:src-centrality-vs-acv-reduction}
\end{figure}

Consequently, to make sure that a candidate edge's addition to the network has a large
impact---either positive or negative---upon the asymptotic consensus value, we can select
candidate edges outgoing from high-centrality nodes. Fortunately, the number of
such nodes in real-world social networks is indeed small, and most nodes are at the
periphery, which justifies our choice of $n_{src} = const \ll n$.

\subsection{Eigencentrality Under Single-edge Perturbation}
\label{sec:eigencentrality-vs-single-edge-perturbation}

In order to tackle \diver~(\ref{eq:opt-general}), we need to understand how the
addition of a single edge $\widetilde{w}_{rc} = \edgeweight{rc} \in (0, 1]$ from
node $r$ to node $c$ in the network affects the eigencentrality vector $\pinew$. 
We assume that edge $(r, c)$ is originally absent, $w_{rc} = 0$,
and use the same single-edge perturbation model
\begin{align}
    \Wnew = W - \edgeweight{rc} \diag(e_r) W + \edgeweight{rc} e_r e_c^\trans \tag{\ref{eq:single-edge-perturbation}}
\end{align}

In our subsequent perturbation analysis, we will make the following Assumption~\ref{thm:rational-selfishness-assumption}.
\begin{assumption}[Rational Selfishness]
    Let us assume that the users are rationally selfish in that for any
    user $i$, $\forall j \neq i: w_{ii} > w_{ij}$. Thus, each user trusts
    his or her own opinion more than the opinion of any other user.
    \label{thm:rational-selfishness-assumption}
\end{assumption}

The following Theorem~\ref{thm:single-edge-perturbation} states how the eigencentrality
vector changes under a single-edge perturbation~(\ref{eq:single-edge-perturbation}).
\begin{theorem}[Single-Edge Perturbation]
    Under Assumption~\ref{thm:rational-selfishness-assumption}, for a single-edge
    perturbation~(\ref{eq:single-edge-perturbation}) of a strongly connected aperiodic
    network with adjacency matrix $W$, the network's eigenvector centrality
    changes as
    \begin{align}
        \pinew_j = \pi_j \left[
            1 - \frac{
                \edgeweight{rc} (m_{cj} (1 - \delta\{j, c\}) - m_{rj} + 1)
            }{
                m_{rr} + \edgeweight{rc} (m_{cr} - m_{rr} + 1)
            }
        \right],
    \label{eq:single-edge-perturbation-pinew}
    \end{align}
    where $m_{ij}$ is the mean first passage time from state $i$ to $j$
    of Markov chain $W$, and $\delta$ is Kronecker delta. In particular,
    \begin{align}
        \pinew_r &= 1 / [m_{rr} + \edgeweight{rc} (m_{cr} - m_{rr} + 1)].
            \label{eq:single-edge-perturbation-pinew-r} 
    \end{align}
    \label{thm:single-edge-perturbation}
\end{theorem}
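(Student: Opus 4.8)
The plan is to exploit the fact that the perturbed matrix $\Wnew$ of~(\ref{eq:single-edge-perturbation}) differs from $W$ only in row $r$ (its new row is $(1-\edgeweight{rc})W_{r\cdot}+\edgeweight{rc}e_c^\trans$): no incoming arc of $r$, no outgoing arc of a node other than $r$, and no self-loop at a node $\neq r$ is disturbed. Under Assumption~\ref{thm:rational-selfishness-assumption} every $w_{ii}>0$, so for each $\edgeweight{rc}\in(0,1]$ self-loops (hence aperiodicity) and strong connectivity are preserved, and $\pinew$ is the unique stationary distribution of $\Wnew$. I would then use the renewal (excursion) description of the stationary distribution that underlies Hunter's mean-first-passage-time perturbation formulas~\cite{hunter2005stationary}: for an ergodic chain with stationary vector $p$, the ratio $p_j/p_r$ equals the expected number of visits to $j$ (counting the starting state) during one excursion from $r$ back to $r$. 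Since only row $r$ changed, a perturbed excursion can be expressed through the \emph{original} chain.

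Let $\nu^{(r)}_{cj}$ denote the expected number of visits to $j$ before the first hit of $r$ when the original chain starts at $c$; strong connectivity makes this finite, with $\nu^{(r)}_{rj}=0$ and $\sum_{j\neq r}\nu^{(r)}_{cj}=m_{cr}$. A first-step analysis of a perturbed excursion from $r$ --- its first step goes to $k$ with probability $\wnew_{rk}=(1-\edgeweight{rc})w_{rk}+\edgeweight{rc}\delta\{k,c\}$, after which, now at a state $\neq r$, the chain runs under the original dynamics until $r$ is hit --- gives, for $j\neq r$,
\begin{align*}
 \frac{\pinew_j}{\pinew_r}=\sum_k\wnew_{rk}\nu^{(r)}_{kj}=\edgeweight{rc}\,\nu^{(r)}_{cj}+(1-\edgeweight{rc})\,\frac{\pi_j}{\pi_r},
\end{align*}
where the identity $\sum_k w_{rk}\nu^{(r)}_{kj}=\pi_j/\pi_r$ is the same renewal fact applied to the original chain. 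Summing over all $j$ (the $j=r$ term equals $1$) and using $\sum_{j\neq r}\pi_j/\pi_r=m_{rr}-1$ yields $1/\pinew_r=m_{rr}+\edgeweight{rc}(m_{cr}-m_{rr}+1)$, which is exactly~(\ref{eq:single-edge-perturbation-pinew-r}).

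It remains to convert $\nu^{(r)}_{cj}$ into mean first passage times. I would prove
\begin{align*}
 \nu^{(r)}_{cj}=\pi_j\,(m_{cr}+m_{rj}-m_{cj})+\delta\{j,c\},\qquad c\neq r,\ j\neq r,
\end{align*}
by checking that its right-hand side satisfies the one-step recursion $\nu^{(r)}_{cj}=\delta\{j,c\}+\sum_{k\neq r}w_{ck}\nu^{(r)}_{kj}$, whose solution over $V\setminus\{r\}$ is unique because $I-W_{\setminus r}$ ($W$ with row and column $r$ removed) is invertible by strong connectivity. The check needs only Theorem~\ref{thm:mfpt-one-hop-cond} (so that $\sum_{k\neq r}w_{ck}m_{kr}=m_{cr}-1$ and $\sum_{k\neq r}w_{ck}m_{kj}=m_{cj}-1-w_{cr}m_{rj}+w_{cj}m_{jj}$) together with $\pi_j m_{jj}=1$ from Theorem~\ref{thm:mfrt-vs-pi}. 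Plugging this identity and $1/\pi_r=m_{rr}$ into the ratio above, factoring out $\pi_j$, using $\delta\{j,c\}/\pi_j=\delta\{j,c\}m_{cc}$ to fold $m_{cj}$ into $m_{cj}(1-\delta\{j,c\})$, and substituting $\pinew_r$ from~(\ref{eq:single-edge-perturbation-pinew-r}) produces~(\ref{eq:single-edge-perturbation-pinew}); its $j=r$ instance is then just a consistency check against~(\ref{eq:single-edge-perturbation-pinew-r}).

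The main obstacle is this final step: establishing the visits-to-MFPT identity and carrying the Kronecker deltas (the coincidences $j=c$ and $j=r$) through the algebra cleanly. An equivalent, more mechanical route bypasses $\nu^{(r)}$: rewrite~(\ref{eq:single-edge-perturbation}) as the rank-one update $\Wnew=W+e_r\beta^\trans$ with $\beta^\trans=\edgeweight{rc}(e_c^\trans-e_r^\trans W)$ (note $\beta^\trans\onebb=0$), solve $\pinew^\trans(I-W)=\pinew_r\,\beta^\trans$ via the Kemeny--Snell fundamental matrix $Z=(I-W+\onebb\pi^\trans)^{-1}$ to get $\pinew^\trans=\pi^\trans+\pinew_r\,\beta^\trans Z$, fix $\pinew_r$ from the $j=r$ coordinate, and substitute $z_{jj}-z_{ij}=\pi_j m_{ij}-\delta\{i,j\}$; this is the computation that most directly ``builds upon'' Hunter's result. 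In either route, Assumption~\ref{thm:rational-selfishness-assumption} is used only to keep $\Wnew$ ergodic and to force $c\neq r$, so that $\pinew$ and all the $m_{ij}$ are well-defined.
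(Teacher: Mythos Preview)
Your argument is correct, but it takes a genuinely different route from the paper. The paper does not re-derive anything from renewal theory or the fundamental matrix; it simply invokes Hunter's single-row perturbation formula (stated in the paper as Theorem~\ref{thm:hunter-single-row-perturbation}) as a black box. That formula is parameterized by the states $a$ and $b$ carrying the minimal negative and maximal positive entries of $\epsilon=\Wnew_{r\cdot}-W_{r\cdot}$. For perturbation~(\ref{eq:single-edge-perturbation}) the only positive entry is at $c$, so $b=c$; and since $\epsilon_k=-\edgeweight{rc}w_{rk}$ for $k\neq c$, the Rational Selfishness assumption is used precisely to pin down $a=r$ (because $w_{rr}>w_{rk}$). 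With $a=r$, $b=c$ identified, the paper substitutes into Hunter's three-case formula and collapses the sums $\sum_{k\neq j}w_{rk}m_{kj}$ via the one-hop conditioning Theorem~\ref{thm:mfpt-one-hop-cond}, obtaining~(\ref{eq:single-edge-perturbation-pinew-r}) first and then~(\ref{eq:single-edge-perturbation-pinew}).

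What each buys: the paper's route is shorter because it outsources the hard work to Hunter, but it makes Assumption~\ref{thm:rational-selfishness-assumption} look structurally essential (it is what lets the paper name $a$). Your excursion argument, by contrast, never needs to single out an extremal-perturbation state, so it clarifies that the assumption is only there to guarantee ergodicity of $\Wnew$ and $c\neq r$; the formula~(\ref{eq:single-edge-perturbation-pinew}) itself holds for any row-$r$ perturbation of this shape. Your visits-to-MFPT identity $\nu^{(r)}_{cj}=\pi_j(m_{cr}+m_{rj}-m_{cj})+\delta\{j,c\}$ is exactly the ingredient that, in the paper's route, is hidden inside Hunter's theorem; your verification via the one-step recursion and Theorems~\ref{thm:mfrt-vs-pi}--\ref{thm:mfpt-one-hop-cond} is sound. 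The fundamental-matrix alternative you sketch is likewise valid and is closest in spirit to how Hunter himself derives Theorem~\ref{thm:hunter-single-row-perturbation}.
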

The proof of Theorem~\ref{thm:single-edge-perturbation} will rely on the perturbation
result of Hunter~\cite{hunter2005stationary}, provided for reference as
Theorem~\ref{thm:hunter-single-row-perturbation} below.
\begin{theorem}[\citex{hunter2005stationary}{Theorem 4.4}]
    Suppose multiple perturbations occur in $r$'th row of $W$. Let $\epsilon_i = \Wnew_{ri} - W_{ri}$,
    the minimal negative perturbation happen at state $a$, with $\epsilon_a = -m = \min{\{\epsilon_j \mid
    1 \leq j \leq n\}}$, and the maximal positive perturbation occur at state $b$ with
    $\epsilon_b = M = \max{\{\epsilon_j \mid 1 \leq j \leq n\}}$. Also, let $P$ be the set of
    positive perturbation indices, excluding $b$, and $N$ be the set of negative perturbation
    indices, excluding $a$. Then,
    \begin{align*}
        \pi_j - \pinew_j =
            \begin{cases}
            \pi_a \pinew_r [
                M m_{ba} +
                \sum\limits_{k \in P \cup N}{\epsilon_k m_{ka}}
            ] & \text{if } j = a,\\
            \pi_b \pinew_r [
                -m m_{ab} +
                \sum\limits_{k \in P \cup N}{\epsilon_k m_{kb}}
            ] & \text{if } j = b,\\
            \pi_j \pinew_r [
                -m m_{aj} + M m_{bj} +
                \sum\limits_{\substack{k \in P \cup N \\k \neq j}}{\epsilon_k m_{kj}}
            ] & \text{if } j \neq a, b.
        \end{cases}
    \end{align*}
    \label{thm:hunter-single-row-perturbation}
\end{theorem}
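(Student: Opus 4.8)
\textbf{Proof Plan for Theorem~\ref{thm:single-edge-perturbation}.}

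The plan is to derive the single-edge perturbation formula for $\pinew_j$ as a specialization of Hunter's multi-row perturbation result (Theorem~\ref{thm:hunter-single-row-perturbation}) to the specific structure of the single-edge model~(\ref{eq:single-edge-perturbation}). First I would compute the perturbation vector $\epsilon$ explicitly. From $\Wnew = W - \edgeweight{rc}\diag(e_r)W + \edgeweight{rc} e_r e_c^\trans$, only row $r$ is modified, so $\epsilon_i = \Wnew_{ri} - W_{ri} = -\edgeweight{rc} w_{ri}$ for $i \neq c$, while $\epsilon_c = -\edgeweight{rc} w_{rc} + \edgeweight{rc} = \edgeweight{rc}$ (using $w_{rc}=0$ since the edge is originally absent). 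Thus every entry of $\epsilon$ is $\leq 0$ except the single positive entry $\epsilon_c = \edgeweight{rc} > 0$, so the maximal positive perturbation occurs at state $b = c$ with $M = \edgeweight{rc}$.

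The next step, and the one I expect to be the main obstacle, is pinning down the index $a$ of the \emph{minimal} (most negative) perturbation, since Hunter's formula treats $a$ and $b$ asymmetrically and the remaining indices are summed over $P \cup N$. This is precisely where Assumption~\ref{thm:rational-selfishness-assumption} (Rational Selfishness) enters: since $w_{rr} > w_{ri}$ for all $i \neq r$, the most negative entry $\epsilon_i = -\edgeweight{rc} w_{ri}$ is attained at $i = r$, giving $a = r$ and $m = -\epsilon_a = \edgeweight{rc} w_{rr}$. Having fixed $a = r$ and $b = c$, the set $P$ of positive indices excluding $b$ is empty, and $N$ is the set of negative indices excluding $a=r$, namely all $k \neq r, c$ with $w_{rk} > 0$. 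I would then substitute these into the three cases of Theorem~\ref{thm:hunter-single-row-perturbation}, rewriting $\pi_j - \pinew_j$ as $\pi_j\pinew_r[\dots]$ and solving for $\pinew_j$.

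The final step is a consolidation computation. Using Theorem~\ref{thm:mfpt-one-hop-cond} (MFPT one-hop conditioning, $m_{ij} = 1 + \sum_{k\neq j} w_{ik}m_{kj}$), the weighted sums $\sum_{k} \epsilon_k m_{kj} = -\edgeweight{rc}\sum_{k\neq c} w_{rk} m_{kj}$ appearing in the bracket can be collapsed: the sum over all out-neighbors $k$ of $r$ of $w_{rk}m_{kj}$ is exactly $m_{rj} - 1$ (with care taken for whether $j$ coincides with $r$ or $c$, handled by the Kronecker delta $\delta\{j,c\}$). This is the key identity that turns Hunter's raw summation into the compact closed form~(\ref{eq:single-edge-perturbation-pinew}); I would treat the cases $j = r$, $j = c$, and $j \neq r,c$ in parallel and verify they all reduce to the single stated expression. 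For the denominator, applying the same one-hop identity to the $j=a=r$ case yields $m_{rr} + \edgeweight{rc}(m_{cr} - m_{rr} + 1)$ after solving the self-referential equation for $\pinew_r$; setting $j = r$ (so $m_{rr}(1-\delta\{r,c\}) = m_{rr}$ since $r \neq c$) in the general formula then recovers~(\ref{eq:single-edge-perturbation-pinew-r}) as the announced special case, closing the proof.
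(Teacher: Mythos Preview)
Your proposal is not a proof of the stated theorem. Theorem~\ref{thm:hunter-single-row-perturbation} is Hunter's single-row perturbation result, which the paper merely quotes from~\cite{hunter2005stationary} without proof; what you have written is instead a proof plan for Theorem~\ref{thm:single-edge-perturbation} (Single-Edge Perturbation), \emph{using} Hunter's result as a black box. If the task was genuinely to establish Theorem~\ref{thm:hunter-single-row-perturbation} itself, then your plan has a gap at the very first step: you never argue why the stationary distribution of the perturbed chain satisfies those three identities in terms of the original MFPTs. That argument (in Hunter's paper) goes through generalized inverses of $I-W$ and is not recoverable from anything you wrote.

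If, on the other hand, the intended target was Theorem~\ref{thm:single-edge-perturbation}, then your plan is correct and is essentially identical to the paper's own proof: the paper also specializes Hunter's formula by identifying $b=c$, $M=\edgeweight{rc}$, $P=\emptyset$, then invokes Assumption~\ref{thm:rational-selfishness-assumption} to force $a=r$, and finally collapses the residual sums $\sum_{k\neq j} w_{rk}m_{kj}$ via Theorem~\ref{thm:mfpt-one-hop-cond} to obtain the closed form, handling $j=r$ first (solving for $\pinew_r$) and then $j\neq r,c$, with $j=c$ declared analogous. Your treatment of the three cases and the denominator matches this exactly.
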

\begin{proof}(Theorem~\ref{thm:single-edge-perturbation})
    Let us apply Theorem~\ref{thm:hunter-single-row-perturbation} to our case
    of a single-edge perturbation~(\ref{eq:single-edge-perturbation}).
    We are adding edge $(r, c)$ with weight $\edgeweight{rc}$ to the network.
    Due to the form~(\ref{eq:single-edge-perturbation}) of our single-edge perturbation,
    the only positive perturbation occurs at the added edge's destination node $c$, so $b = c$,
    $\epsilon_c = M = \edgeweight{rc}$, and $P = \emptyset$.
    For all the other out-neighbors $i$ of the new edge's source node $r$, the corresponding
    perturbations $\epsilon_i = -\edgeweight{rc} w_{ri}$ are negative.
    Due to Assumption~\ref{thm:rational-selfishness-assumption}, $\forall i \neq r: w_{rr} > w_{ri}$,
    so the minimal negative perturbation occurs at $i = r$, and, thus, $a = r$ and
    $\epsilon_a = -m = -\edgeweight{rc} w_{rr}$.
    
    Let us first show the validity of~(\ref{eq:single-edge-perturbation-pinew}) in case
    of $j = r$, that is, (\ref{eq:single-edge-perturbation-pinew-r}). According to
    Theorem~\ref{thm:hunter-single-row-perturbation},
    \begin{align*}
        \pi_r - \pinew_r
            &= \pi_r \pinew_r \left[
                    \edgeweight{rc} m_{cr} +
                    \sum\limits_{k \in P \cup N}{(-\edgeweight{rc}w_{rk})m_{kr}}
                \right]\\
            &= (\text{as $w_{rc} = 0$}) = \edgeweight{rc} \pi_r \pinew_r \left[
                    m_{cr} - \sum\limits_{k \neq r}{w_{rk} m_{kr}}
                \right].
    \end{align*}
    Using the one-hop conditioning Theorem~\ref{thm:mfpt-one-hop-cond}, the obtained
    expression can be written as
    \begin{align*}
        \pi_r - \pinew_r &= \edgeweight{rc} \pi_r \pinew_r \left[ m_{cr} - m_{rr} + 1 \right]\\[0.04in]
        \iff \pinew_r &= \pi_r / (1 + \edgeweight{rc} \pi_r [m_{cr} - m_{rr} + 1]).
    \end{align*}
    Dividing the numerator and denominator in the right-hand side of the obtained
    expression by $\pi_r > 0$ and using equality $1 / \pi_r = m_{rr}$ from
    Theorem~\ref{thm:mfrt-vs-pi}, we obtain~(\ref{eq:single-edge-perturbation-pinew-r}).
    
    Let us similarly show the validity of the case $j \neq r, c$. From
    Theorem~\ref{thm:hunter-single-row-perturbation},
    \begin{align*}
        \pi_j - \pinew_j
            &= \pi_j \pinew_r \left[
                -\edgeweight{rc} w_{rr} m_{rj}
                + \edgeweight{rc} m_{cj}
                + \sum\limits_{\substack{k \in P \cup N \\k \neq j}}{
                    (-\edgeweight{rc}w_{rk})m_{kj}
                }
            \right]\\
            &= \edgeweight{rc} \pi_j \pinew_r \left[
                m_{cj}
                - w_{rr} m_{rj}
                - \sum\limits_{k \neq r, c, j}{
                    w_{rk} m_{kj}
                }
            \right]\\
            &= (\text{as $w_{rc} = 0$}) = \edgeweight{rc} \pi_j \pinew_r \left[
                m_{cj}
                - \sum\limits_{k \neq j}{ w_{rk} m_{kj} }
            \right]\\
            &\iff \text{(from Theorem~\ref{thm:mfpt-one-hop-cond})}\\[0.1in]
            &\iff \pinew_j = \pi_j [1 - \edgeweight{rc} \pinew_r (m_{cj} - m_{rj} + 1)].
    \end{align*}
    Substituting~(\ref{eq:single-edge-perturbation-pinew-r}) in the obtained expression,
    we get~(\ref{eq:single-edge-perturbation-pinew}) for $j \neq c$. The proof for case
    $j = c$ is similar and, hence, is omitted.
    \qedbull
\end{proof}

The following Corollary~\ref{thm:small-nsrc}---justifying Sec.~\ref{fig:src-centrality-vs-acv-reduction}'s
focus on top-centrality edge source nodes---immediately follows from
equation~(\ref{eq:single-edge-perturbation-pinew}) of Theorem~\ref{thm:single-edge-perturbation}
used with Theorem~\ref{thm:mfrt-vs-pi}.
\begin{corollary}
    Under perturbation~(\ref{eq:single-edge-perturbation}) of the network
    with a single edge $(r, c)$, $\edgeweight{rc} > 0$, it holds that
    $\lim_{\pi_r \to 0}{\pinew} = \pi$, and, thus,
    $\lim_{\pi_r \to 0}{f_\pi(r, c)} = \lim_{\pi_r \to 0}{(\acvnewxs - \acvnewxpis)} = 0$.
    \label{thm:small-nsrc}
\end{corollary}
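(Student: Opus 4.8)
The plan is to read the limit straight off the closed form~(\ref{eq:single-edge-perturbation-pinew}) once the mean first return time is expressed through the centrality itself. By Theorem~\ref{thm:mfrt-vs-pi}, $m_{rr}=1/\pi_r$, so the denominator of the perturbation term in~(\ref{eq:single-edge-perturbation-pinew}) equals $(1-\edgeweight{rc})/\pi_r+\edgeweight{rc}(m_{cr}+1)$. Multiplying the numerator and denominator of that term by $\pi_r$ rewrites~(\ref{eq:single-edge-perturbation-pinew}) as
\begin{align*}
    \pinew_j=\pi_j\left[1-\frac{\edgeweight{rc}\,\pi_r\,\bigl(m_{cj}(1-\delta\{j,c\})-m_{rj}+1\bigr)}{(1-\edgeweight{rc})+\edgeweight{rc}\,\pi_r\,(m_{cr}+1)}\right],
\end{align*}
equivalently $\pinew_j=\pi_j\bigl[1-\edgeweight{rc}\pinew_r(m_{cj}(1-\delta\{j,c\})-m_{rj}+1)\bigr]$ by~(\ref{eq:single-edge-perturbation-pinew-r}).

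Next I would send $\pi_r\to0$, with $\edgeweight{rc}\in(0,1)$ fixed (the meaningful regime --- $\edgeweight{rc}=1$ makes $r$ abandon all its other out-neighbors and needs a separate argument). For $j\neq r$ the displayed denominator stays $\ge 1-\edgeweight{rc}>0$, so $|\pinew_j-\pi_j|\le\frac{\pi_j\,\edgeweight{rc}}{1-\edgeweight{rc}}\bigl(m_{cj}+m_{rj}+1\bigr)\,\pi_r$, which vanishes; for $j=r$, equation~(\ref{eq:single-edge-perturbation-pinew-r}) gives $0<\pinew_r\le\pi_r/(1-\edgeweight{rc})\to 0=\lim\pi_r$. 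Hence $\pinew_j\to\pi_j$ for every $j$, and since both vectors lie in the probability simplex this is exactly $\pinew\to\pi$.

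The one place that needs care --- the main obstacle --- is that the coefficient $\frac{\edgeweight{rc}}{1-\edgeweight{rc}}(m_{cj}+m_{rj}+1)$ is built from mean first passage times of the very network whose centrality vector is being varied, so to read the statement as a genuine limit over a family of networks one must know these MFPTs do not grow as fast as $1/\pi_r$. I would control them with the random-target identity ($\sum_\ell\pi_\ell m_{i\ell}$ is independent of $i$, equal to the Kemeny constant $\mathcal{K}$), which yields $m_{i\ell}\le\mathcal{K}/\pi_\ell$; then $\pi_r m_{cr}\le\mathcal{K}$, $\pi_r m_{cj}\le(\pi_r/\pi_c)\mathcal{K}$, $\pi_r m_{rj}\le(\pi_r/\pi_j)\mathcal{K}$, and since driving a single node's centrality to zero leaves the remaining $\pi_\ell$ --- and, in the regime of interest, $\mathcal{K}$ --- bounded, each of these tends to $0$. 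For a single fixed network the plain $O(\pi_r)$ bound of the previous paragraph already suffices.

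Finally, the claim about $f_\pi$ is immediate from the first part by linearity and continuity of the inner product: $f_\pi(r,c)=\langle\pi,\xnew\rangle-\langle\pinew,\xnew\rangle=\langle\pi-\pinew,\xnew\rangle\to\langle 0,\xnew\rangle=0$.
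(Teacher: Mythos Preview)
Your argument is correct and follows exactly the route the paper indicates: the corollary is stated as an immediate consequence of equation~(\ref{eq:single-edge-perturbation-pinew}) combined with Theorem~\ref{thm:mfrt-vs-pi} ($m_{rr}=1/\pi_r$), and that is precisely what you do---substitute $m_{rr}=1/\pi_r$, multiply through by $\pi_r$, and let $\pi_r\to0$. You add rigor the paper omits (the restriction $\edgeweight{rc}<1$, the explicit bound for $j=r$ via~(\ref{eq:single-edge-perturbation-pinew-r}), and the Kemeny-constant control of the MFPTs when the limit is read over a family of networks), but the core mechanism is identical; one small slip---your bound $\pi_r m_{cj}\le(\pi_r/\pi_c)\mathcal{K}$ should read $(\pi_r/\pi_j)\mathcal{K}$---does not affect the conclusion.
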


\subsection{Asymptotic Consensus Value Under Single-Edge Perturbation}
\label{sec:fpi}

To solve \diver, we are interested in adding candidate edges that would result in a large
reduction $f_\pi(r, c) = \langle \pi, \xnew \rangle - \langle \pinew, \xnew \rangle$ of the asymptotic
consensus value. While Theorem~\ref{thm:single-edge-perturbation} states how different
components of the eigencentrality vector change under a single-edge perturbation~(\ref{eq:single-edge-perturbation}), the
following Theorem~\ref{thm:edge-score} is concerned with the effect of such perturbation upon
the value of $f_\pi(r, c)$.
\begin{theorem}
    Under the rational selfishness Assumption~\ref{thm:rational-selfishness-assumption},
    for a single-edge perturbation~(\ref{eq:single-edge-perturbation}) of $W$, the
    asymptotic consensus value $\langle \pi, \xnew \rangle$ decreases as follows:
    \begin{align}
        f_{\pi}(r, c) &= \langle \pi, \xnew \rangle - \langle \pinew, \xnew \rangle \notag \\
            &= \edgeweight{rc} \frac{
                \sum\limits_{j = 1}^{n}{
                    \pi_j
                    (m_{cj} \cdot (1 - \delta\{j, c\}) - m_{rj} + 1)
                    \xnew_j
                }
            }{
                m_{rr} + \edgeweight{rc}(m_{rc} - m_{rr} + 1)
            }.
        \label{eq:fpi}
    \end{align}
    \label{thm:edge-score}
\end{theorem}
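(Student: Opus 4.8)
\emph{Proof proposal.} Since $\langle\cdot,\xnew\rangle$ is linear, the plan is simply to push the single-edge perturbation formula of Theorem~\ref{thm:single-edge-perturbation} through it coordinatewise. First I would write
\begin{align*}
    f_\pi(r,c) = \langle \pi, \xnew \rangle - \langle \pinew, \xnew \rangle
        = \sum_{j=1}^{n} (\pi_j - \pinew_j)\, \xnew_j,
\end{align*}
so the entire claim reduces to inserting the known expression for $\pi_j - \pinew_j$ and tidying up the resulting scalar sum.

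From equation (\ref{eq:single-edge-perturbation-pinew}) of Theorem~\ref{thm:single-edge-perturbation} one reads off
\begin{align*}
    \pi_j - \pinew_j = \pi_j \cdot \frac{ \edgeweight{rc}\, (m_{cj}(1 - \delta\{j,c\}) - m_{rj} + 1) }{ m_{rr} + \edgeweight{rc}(m_{cr} - m_{rr} + 1) }.
\end{align*}
The single observation that makes everything collapse is that the denominator here is exactly $1/\pinew_r$ by (\ref{eq:single-edge-perturbation-pinew-r}) and, crucially, does not depend on the summation index $j$. Hence it, together with the scalar factor $\edgeweight{rc}$, pulls out of the sum over $j$, leaving
\begin{align*}
    f_\pi(r,c) = \frac{\edgeweight{rc}}{ m_{rr} + \edgeweight{rc}(m_{cr} - m_{rr} + 1) } \sum_{j=1}^{n} \pi_j\, (m_{cj}(1 - \delta\{j,c\}) - m_{rj} + 1)\, \xnew_j,
\end{align*}
which is precisely (\ref{eq:fpi}).

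Because Theorem~\ref{thm:single-edge-perturbation} already does all the analytic work, I do not expect a genuine obstacle here: the statement is essentially a one-line corollary. The only points needing a moment of care are (i) the term $j = c$, which is absorbed cleanly by the factor $1-\delta\{j,c\}$ and matches the ``$j=c$'' case deferred in the proof of Theorem~\ref{thm:single-edge-perturbation}; (ii) carrying Assumption~\ref{thm:rational-selfishness-assumption} into the hypothesis, since (\ref{eq:single-edge-perturbation-pinew}) is only valid under it; and (iii) reading the mean-first-passage time in the denominator as the expected time $m_{cr}$ from the new edge's head $c$ back to its tail $r$, consistent with $\pinew_r = 1/[m_{rr} + \edgeweight{rc}(m_{cr} - m_{rr} + 1)]$ used above. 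A quick sanity check on the bookkeeping: the ``$+1$'' summand contributes $\edgeweight{rc}\,\pinew_r\,\langle\pi,\xnew\rangle$ to $f_\pi(r,c)$, exactly as it should.
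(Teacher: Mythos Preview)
Your proposal is correct and follows exactly the paper's own approach: the paper's proof is a one-liner stating that the claim follows ``straightforwardly'' by substituting expression~(\ref{eq:single-edge-perturbation-pinew}) for $\pinew_j$ into $f_\pi(r,c)=\langle \pi-\pinew,\xnew\rangle$, which is precisely what you do (with more detail). Your remark~(iii) is also well taken: the denominator coming from~(\ref{eq:single-edge-perturbation-pinew-r}) indeed carries $m_{cr}$, so the $m_{rc}$ appearing in the displayed statement~(\ref{eq:fpi}) is a typo in the paper rather than a flaw in your derivation.
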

\begin{proof}
    The validity of the theorem's claim is straightforwardly established by
    substituting expression~(\ref{eq:single-edge-perturbation-pinew}) for $\pinew_j$
    from Theorem~\ref{thm:single-edge-perturbation} into
    $f_{\pi}(r, c) = \langle \pi - \pinew, \xnew \rangle$.
    \qedbull
\end{proof}
Essentially, Theorem~\ref{thm:edge-score} provides us with an edge score $f_{\pi}(r, c)$,
whose use for candidate edge selection comprises our heuristic for \diver. Unfortunately,
$f_{\pi}$'s computation is rather challenging, and is addressed in the
following section.

\subsection{Efficient Computation of Candidate Edge Scores}
\label{sec:fpi-efficient-comp}

Computation of candidate edge scores $f_\pi$ is challenging for two reasons. Firstly,
expression~(\ref{eq:fpi}) involves summation over all $n$ network nodes.
Since there are $\bigoh(n)$ candidate edges (with $n_{src} \ll n$ sources and $n$
destinations), it would result in at least a quadratic-time heuristic for \diver
that would not scale. Secondly,
expression~(\ref{eq:fpi}) involves mean first passage times, whose direct computation
is very expensive. We address both these challenges separately below.

\subsubsection{Focus on a Small Number of Nodes}
\label{sec:fpi-small-num-of-js}

Our first concern is that expression~(\ref{eq:fpi}) for $f_\pi$ contains summation over
all $n$ nodes. Intuitively, not all the nodes of a social network contribute equally to the value of~(\ref{eq:fpi}).
Indeed, in networks with skewed eigencentrality distribution, such as scale-free networks,
$f_\pi$ is mostly determined by a small number of top-centrality nodes. The latter is
illustrated in Fig.~\ref{fig:fpi-via-few-nodes}, where 50\% of the value of $f_\pi$ is
determined by less than 10\% of the top-centrality nodes of a scale-free
network\footnote{This would not be true for networks with ``uniform'' structure, such as
Erd\H{o}s–R\'{e}nyi (ER) or Watts-Strogatz (WS) networks. While we are not concerned with
such networks in this paper, it is still interesting to notice that, for ER and WS
networks, the value of $f_\pi$ is mostly determined by just two components of the
sum, corresponding to the candidate edge's ends.}.
\begin{figure}[ht]
    \centering
    \includegraphics[width=0.9\linewidth]{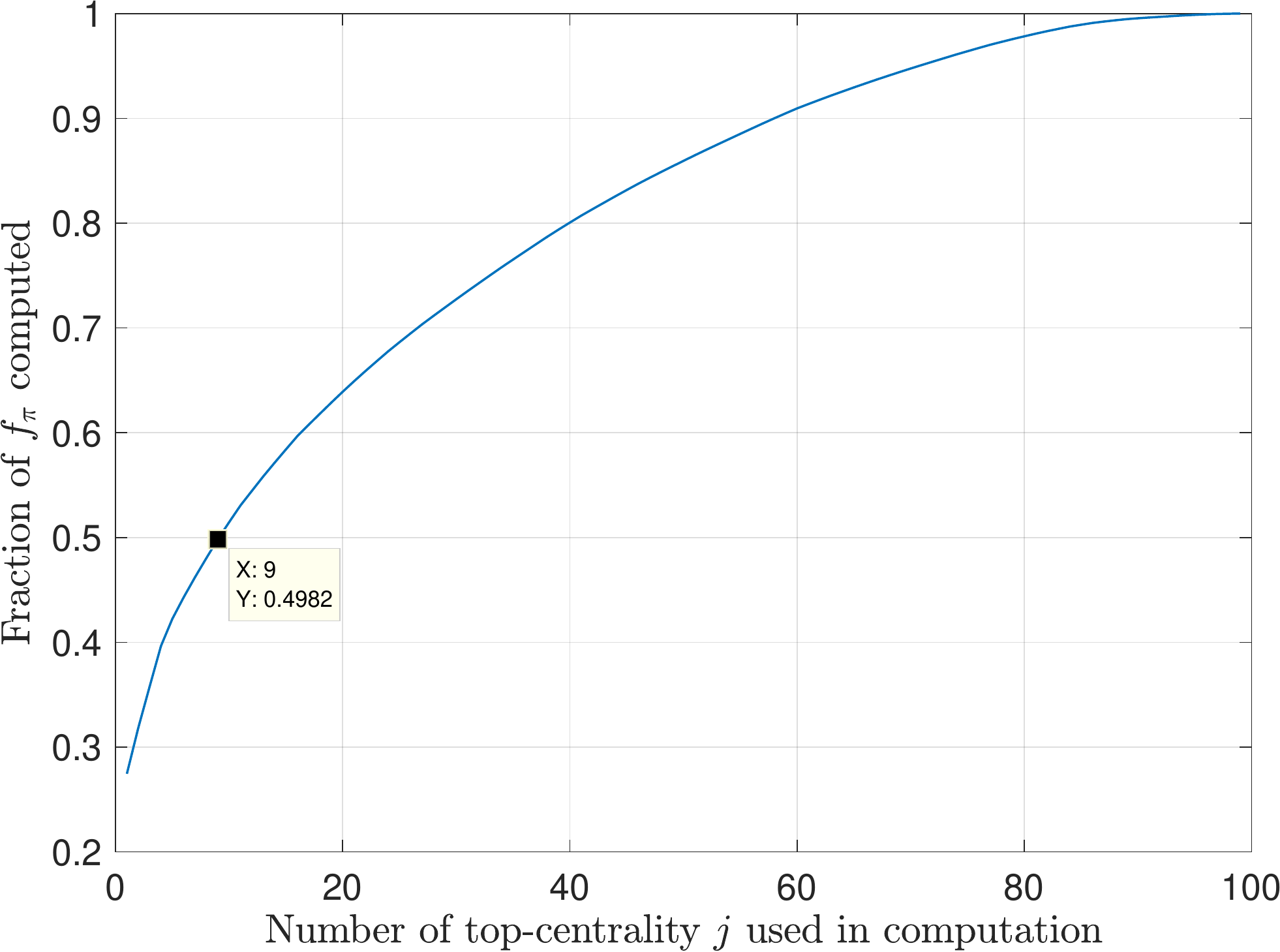}
    \caption{
        In a scale-free network $(n = 100, \gamma = -2.5)$, candidate edge scores
        $f_\pi$ are approximately computed using only a small fraction of network
        nodes in summation in~(\ref{eq:fpi}).
    }
    \label{fig:fpi-via-few-nodes}
\end{figure}

In Fig.~\ref{fig:fpi-exact-vs-approx}, we show how the approximately computed $f_\pi$
is related to its exactly
computed counterpart when we use different numbers of a scale-free network's nodes
for $f_\pi$'s computation. We can see that, even when we use only 10\% of nodes, the
relative order of $f_\pi$ for different candidate edges is close to the original, and
it is still easy to identify candidate edges $(r, c)$ with high values of $f_\pi(r ,c)$.
\begin{figure}[ht]
    \centering
    \includegraphics[width=0.9\linewidth]{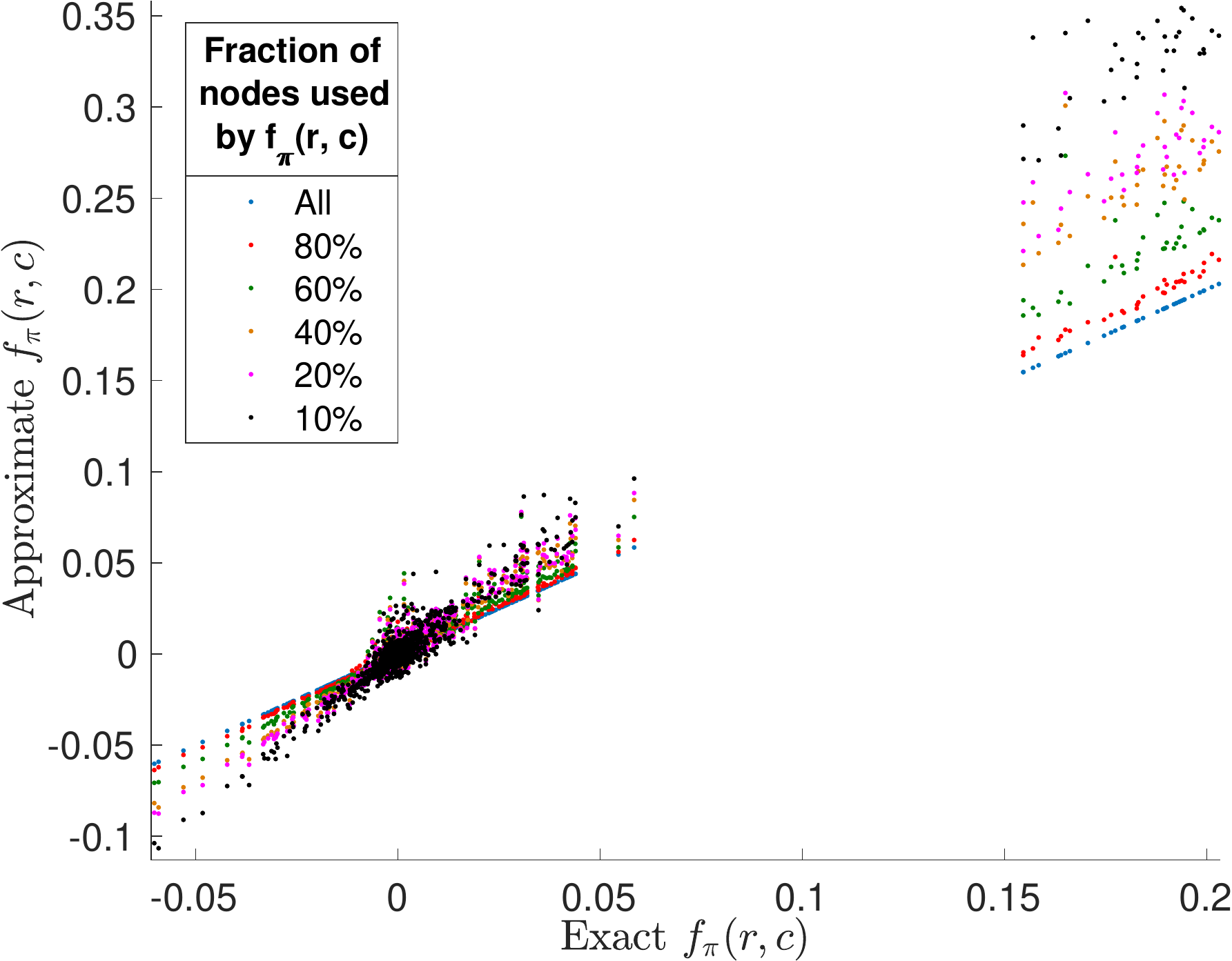}
    \caption{
        Comparison of exact and approximate candidate edge scores $f_\pi$
        in a scale-free network $(n = 100, \gamma = -2.5)$.
    }
    \label{fig:fpi-exact-vs-approx}
\end{figure}

Thus, to efficiently compute $f_\pi(r, c)$, we will use only those $j$ in~(\ref{eq:fpi})
corresponding to $n_{src}$ top-centrality nodes in the network
(in addition to $j = r, c$).

\subsubsection{Efficient Computation of Mean First Passage Times}
\label{sec:estimating-mfpt}

In the previous section, we have considerably simplified computation of
$f_\pi(r, c)$ by leaving only $\bigoh(n_{src})$ terms in expression~(\ref{eq:fpi}).
Now, our concern is to actually find values of the mean first passage times
$m_{ij}$ remaining in~(\ref{eq:fpi}).

The classic MFPT computation method of Kemeny and Snell~\cite{kemeny1976finite}
relies on the fundamental matrix $Z$ of Markov chain $W$, and defines MFPTs as
\begin{align*}
    Z &= (I - W + \onebb \pi^\trans)^{-1},\\
    M &= \{m_{ij}\} = (I - Z + \onebb \onebb^\trans \diag(Z)) \diag^{-1}(\onebb \pi).
\end{align*}
Computation of the fundamental matrix involves a cubic-time matrix inversion and
would not scale.
Hunter~\cite{hunter2017computation}
provides a survey of 11 alternative methods for MFPT computation, but all of them share
the same high complexity. Most importantly, however, all existing method target
computation of all $\bigoh(n^2)$ MFPTs between all the nodes in the network.

Let us notice that expression~(\ref{eq:fpi}) for $f_\pi$
\begin{align*}
    f_{\pi}(r, c) = \edgeweight{rc} \frac{
            \sum\limits_{j = 1}^{n}{
                \pi_j
                (m_{cj} \cdot (1 - \delta\{j, c\}) - m_{rj} + 1)
                \xnew_j
            }
        }{
            m_{rr} + \edgeweight{rc}(m_{rc} - m_{rr} + 1)
        }
    \tag{\ref{eq:fpi}}
\end{align*}
uses MFPTs either from or to high-centrality nodes: $r$ are top-centrality according
to Sec.~\ref{sec:edge-source-selection}; $j$ are top-centrality according to
Sec.~\ref{sec:fpi-small-num-of-js}). There are $n_{src} n \ll n^2$ such 
MFPTs, where $n_{src}$ is the number of candidate edge source nodes $r$.

We propose to estimate $n_{src} n$ MFPTs between a small number of nodes by
performing a finite-time random walk and estimating passage times between
the nodes. The walk starts at an arbitrary node, and proceeds for a predefined
number of hops following the transition probabilities defined by the adjacency
matrix $W$ viewed here as the state transition matrix of a Markov chain. While
performing the walk, we accumulate the passage times between $n_{src}$ candidate
edge sources and $n$ candidate edge destinations, and compute the means when
the walk is complete. This approach towards MFPT estimation is similar to the
$k$-Step Markov Approach that White and Smyth~\cite{white2003algorithms}
used for estimation of their MFPT-based relative importance of network nodes.

The key questions here are \emph{Will the proposed method result in good
estimates of MFPTs to and from high-centrality nodes?} and \emph{If so, how long
should that random walk be?} We answer these two questions via empirical analysis.

The first insight is that MFPTs to and from high-centrality nodes converge very fast,
since the walk visits such
nodes most often. This is illustrated in Fig.~\ref{fig:mfpt-est-vs-maxpi},
according to which the quality of MFPT estimates $m_{ij}$ noticeably varies with
the walk's length when both $i$ and $j$ are low-centrality, and is uniformly high
if at least one of $i$ and $j$ is a high-centrality node.
\begin{figure}[ht]
    \centering
    \includegraphics[width=0.85\linewidth]{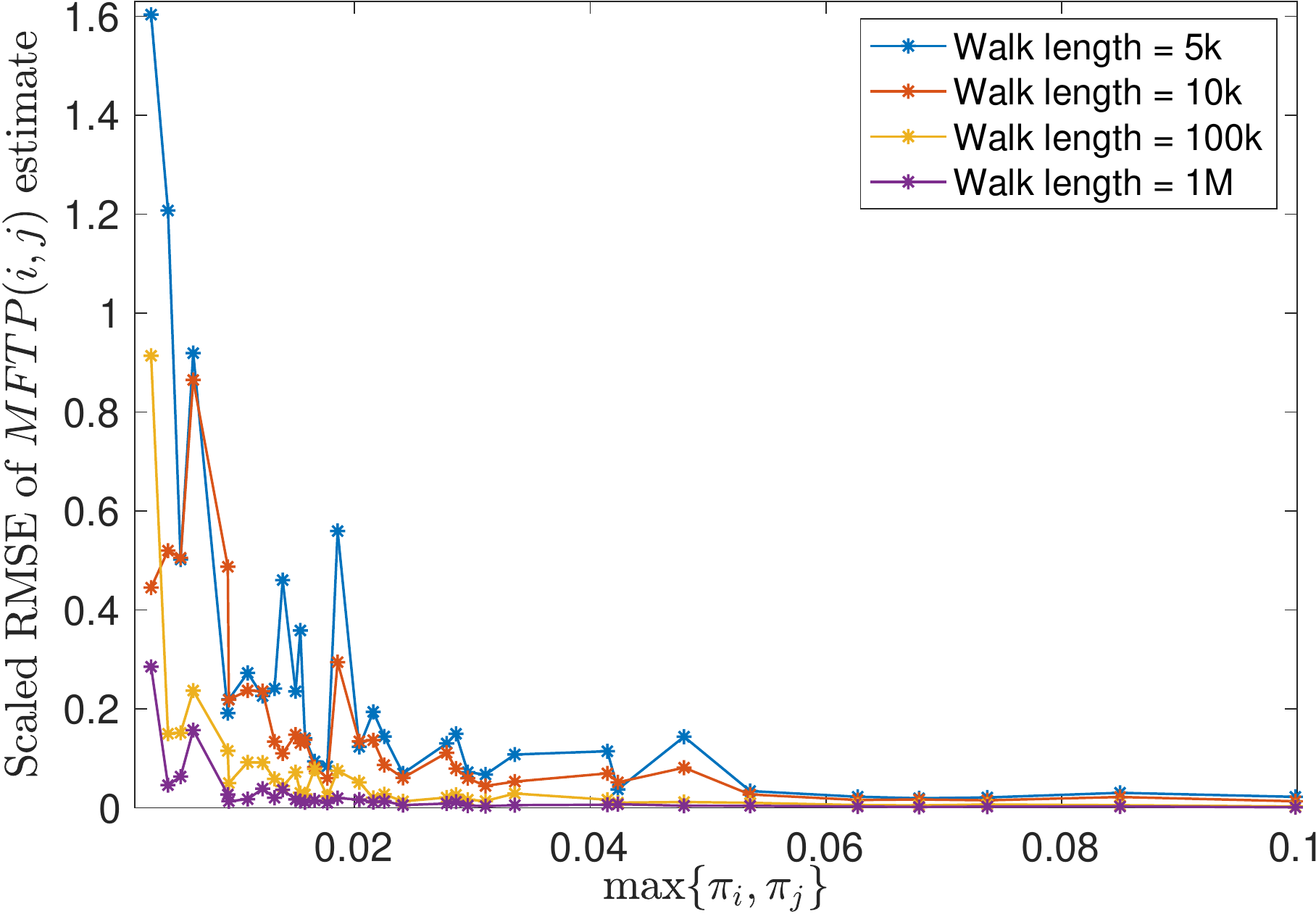}
    \caption{
        Quality of MFPT estimation in a scale-free network $(n=100, \gamma = -2.5)$
        using walks of different length.
    }
    \label{fig:mfpt-est-vs-maxpi}
\end{figure}
This insight echoes the result of Avrachenkov et al.~\cite{avrachenkov2007monte},
who show that PageRank estimates for high-centrality nodes obtained via Monte Carlo
simulation converge very fast.

Now, we empirically study the question of how long the random walk should be
to obtain sufficiently good estimates of MFPTs to and from top-centrality nodes
in a scale-free network, and report the results in Figures~\ref{fig:mfpt-est-rwlen}.
\begin{figure*}[ht]
    \centering
    \begin{subfigure}{0.48\linewidth}
        \includegraphics[width=\linewidth]{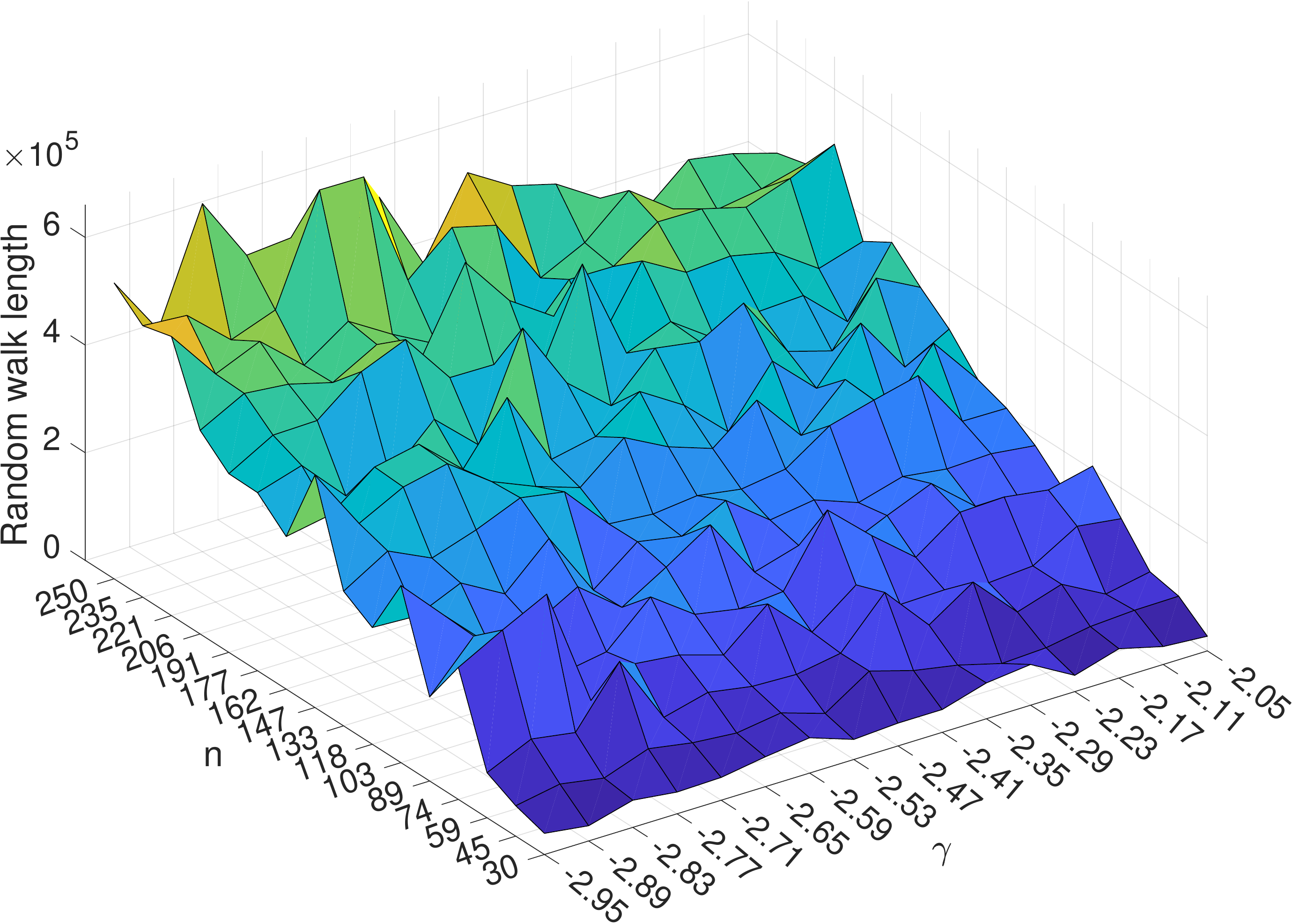}
        \caption{}
        \label{fig:mfpt-est-conv-niter-suff-3d}
    \end{subfigure}
    \hfill
    \begin{subfigure}{0.48\linewidth}
        \includegraphics[width=\linewidth]{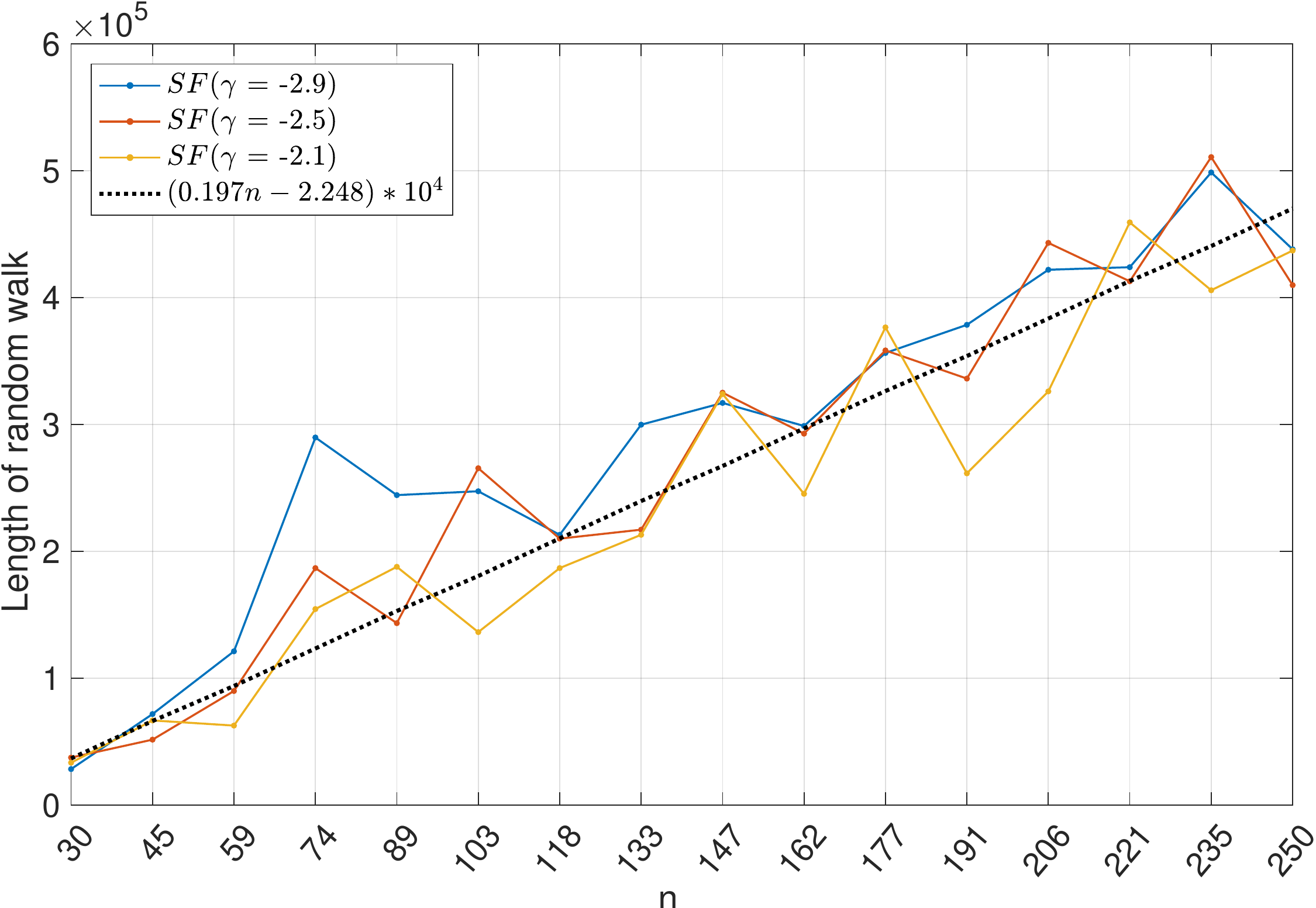}
        \caption{}
        \label{fig:mfpt-est-conv-niter-suff-heat-profile}
    \end{subfigure}
    \caption{
        Dependency of the length of a random walk---used for estimating MFPTs to and from
        top-centrality nodes---on the size and density of the scale-free network.
    }
    \label{fig:mfpt-est-rwlen}
\end{figure*}
%
Fig.~\ref{fig:mfpt-est-conv-niter-suff-3d} shows how many steps a random walk
should perform in order for 5\% of MFPTs to and from top 5\% high-centrality
nodes to converge within 5\% of their true values, while the network's
size $n$ and scale-free exponent $\gamma$ vary. For each pair $(n, \gamma)$,
100 networks are generated, and the mean walk lengths are
reported. Fig.~\ref{fig:mfpt-est-conv-niter-suff-heat-profile} shows the same
data for 3 specific scale-free exponents, $\gamma \in \{ -2.9, -2.5, -2.1\}$.
The length of the random walk does not depend on the scale-free exponent, and
depends upon the network's size $n$ as $(0.197n - 2.248) \cdot 10^4$.
These results allow to make the following statement.
\begin{proposition}[Random Walk Length]
    In scale-free networks, the length of a finite random walk sufficient
    for convergence of $\bigoh(n)$ MFPTs to and from $\bigoh(1)$ top-centrality
    nodes is $\bigoh(n)$ (in contrast to the $\bigoh(n^3)$ cost of the direct
    computation of all MFPTs via the fundamental matrix method).
    \label{thm:random-walk-length}
\end{proposition}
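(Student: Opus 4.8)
The plan is to pair a short probabilistic argument about visit frequencies along the walk with the empirical measurements already reported in Fig.~\ref{fig:mfpt-est-vs-maxpi} and Fig.~\ref{fig:mfpt-est-rwlen}. First I would fix the estimator under analysis: from a single length-$T$ trajectory of the chain $W$, the estimate of $m_{ij}$ is the empirical mean of the elapsed times measured from each visit to state $i$ until the next visit to state $j$. If $N_i(T)$ is the number of visits to $i$, then the ergodic theorem for the regular (aperiodic, strongly connected) chain $W$ gives $N_i(T)/T \to \pi_i$ almost surely, and with high probability $N_i(T) = T\pi_i + \bigoh(\sqrt{T})$. Hence the estimate of $m_{ij}$ is formed from $\Theta(T\pi_i)$ samples of the first passage time $T_{ij}$, so a renewal / central-limit argument gives it a relative error of order $\mathrm{cv}(T_{ij})/\sqrt{T\pi_i}$, where $\mathrm{cv}$ denotes the coefficient of variation.

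Second, I would use the hypothesis that the MFPTs of interest run to or from a top-centrality node. Every MFPT appearing in $f_\pi$ (see~(\ref{eq:fpi})) is either started at a top-centrality source $r$ (Sec.~\ref{sec:edge-source-selection}) or terminates at a top-centrality node $j$ or $r$ (Sec.~\ref{sec:fpi-small-num-of-js}). A top-centrality state is revisited on average every $m_{jj}=1/\pi_j$ steps (Theorem~\ref{thm:mfrt-vs-pi}), so a first passage terminating at such a state, from any source, has light, well-concentrated tails, while a passage starting at such a source enjoys $\Theta(T\pi_r)$ samples; this is the fast-convergence behaviour documented in Fig.~\ref{fig:mfpt-est-vs-maxpi} (uniformly good estimates whenever one endpoint has high centrality), which I would record as $\mathrm{cv}(T_{ij}) = \bigoh(1)$ uniformly in $n$ for every such pair. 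Combined with the first step, a fixed target relative error $\epsilon$ requires $T\pi_i = \Omega(\epsilon^{-2})$ at the measurement endpoint $i$, i.e.\ $T = \Omega\big(\epsilon^{-2}/\pi_i\big)$. In a scale-free network the relevant $i$ ranges over the $n$ candidate-edge destinations, whose centralities satisfy $\pi_i = \Omega(1/n)$ (constant minimum degree, total degree $\Theta(n)$) with the sparsest attaining $\pi_i = \Theta(1/n)$; this simultaneously shows that $T = \bigoh(n)$ suffices for all $\bigoh(n)$ MFPTs in $f_\pi$ and that $T = \Omega(n)$ is forced, hence $T = \Theta(n)$. The argument used only $\min_i d_i = \Theta(1)$ and $\sum_i d_i = \Theta(n)$, which hold for every scale-free exponent, explaining the $\gamma$-independence in Fig.~\ref{fig:mfpt-est-conv-niter-suff-heat-profile}; the leading constant (about $0.197\cdot 10^4$ per node) and the ``fraction within tolerance'' criterion are then read off Fig.~\ref{fig:mfpt-est-rwlen}.

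The hard part will be making the renewal / CLT step rigorous with constants uniform in $n$. The samples taken from successive visits to $i$ are not independent, since the excursions from $i$ to $j$ overlap; a clean treatment would instead cut the trajectory at the return times of the top-centrality endpoint $j$, observe that each inter-return block visits $i$ a controlled (roughly geometric) number of times and contributes an $\bigoh(1)$-variance increment, and apply a regenerative CLT. The technical crux is the $n$-uniform estimate $\mathrm{cv}(T_{ij}) = \bigoh(1)$ itself: it would follow from variance identities of the form $\mathrm{Var}(T_{ij}) = \bigoh(m_{ij}/\pi_j)$ together with crude two-sided bounds on $m_{ij}$ when $j$ is high-centrality, but verifying these for the scale-free ensembles used here is where the argument becomes delicate --- which is why I would present the visit-frequency reasoning above as the conceptual proof and defer the sharp constant and the robustness in $\gamma$ to the experiments of Fig.~\ref{fig:mfpt-est-rwlen}.
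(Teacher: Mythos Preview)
Your proposal goes well beyond what the paper actually offers. In the paper, Proposition~\ref{thm:random-walk-length} is not proved at all in the analytic sense: it is stated immediately after the sentence ``These results allow to make the following statement,'' where ``these results'' refers solely to the simulations summarized in Fig.~\ref{fig:mfpt-est-vs-maxpi} and Fig.~\ref{fig:mfpt-est-rwlen}. The paper's entire justification is the empirical fit $(0.197n - 2.248)\cdot 10^4$ and the observed $\gamma$-independence; there is no ergodic-theorem, renewal, or variance argument accompanying it.

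Your plan therefore constitutes a genuinely different (and more ambitious) route: you try to explain \emph{why} the walk length should scale linearly, via visit counts $N_i(T)\approx T\pi_i$, the bound $\pi_i=\Omega(1/n)$ for scale-free graphs, and a regenerative CLT for the passage-time samples. This buys a conceptual explanation for both the linear scaling and the insensitivity to $\gamma$ that the paper merely observes. The trade-off, which you correctly flag, is that the key uniformity claim $\mathrm{cv}(T_{ij})=\bigoh(1)$ in $n$ (and the handling of the overlapping, dependent excursion samples) is not established rigorously; you ultimately lean on the same figures the paper does. So your argument is best read as a heuristic derivation plus the paper's experiments, whereas the paper presents only the experiments. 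Neither is a proof in the strict sense, but yours supplies the intuition the paper leaves implicit.
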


\subsection{Solving \texorpdfstring{\diver}{}}
\label{sec:solving-diver}

In this section, we gather all our results, formally state a heuristic for
solving \diver as Algorithm~\ref{alg:diver-heuristic}, and analyze its complexity
in Theorem~\ref{thm:diver-algo-time-complexity}.

\newcommand{\asep}[0]{\vspace{0.04in}}
\begin{algorithm}[ht] 
    \caption{Heuristic for \diver}
    \begin{algorithmic}[1]
        \renewcommand{\algorithmicrequire}{\textbf{Input:}}
        \renewcommand{\algorithmicensure}{\textbf{Output:}}

        \REQUIRE $W$---row-stochastic irreducible aperiodic sparse interpersonal appraisal matrix;
                 $k$---number of new edges to add;
                 $n_{src}$---maximal number of new edges' sources.
                 
        \ENSURE sequence $(r_1, c_1), (r_2, c_2), \dots$ of new edges to add
        
        \asep
        \STATE Compute eigenvector centrality $\pi$ \label{alg:diver-heuristic:step1}
                
        \asep
        \STATE Define candidate edge source nodes:  \label{alg:diver-heuristic:step2}\\
                $R \gets \text{$n_{src}$ top-centrality nodes in the network}$

        \asep
        \STATE{
            Estimate MFPTs $\{m_{ij}\}$ to and from each $r \in R$
            \label{alg:diver-heuristic:step4}
        }
        
        \asep
        \FOR{$r \in R$, $c \in \{1, \dots, n\}$ \label{alg:diver-heuristic:step5}}
            \STATE{
                Estimate $f_\pi(r, c)$ using $\bigoh(n_{src})$ top-centrality nodes
                \label{alg:diver-heuristic:step6}
            }
        \ENDFOR \label{alg:diver-heuristic:step7}
        
        \asep
        \STATE $S \gets \text{candidate edges $(r, c)$ having top-$k$ scores $f_\pi(r, c)$}$
            \label{alg:diver-heuristic:step8}
        
        \RETURN S
     \end{algorithmic}
\label{alg:diver-heuristic}
\end{algorithm}

\begin{theorem}
    Time-complexity of Algorithm~\ref{alg:diver-heuristic} is
    $\bigoh(n(\gap(W) + n_{src}^2) + n_{src} \log{n_{src}} + k \log{k})$,
    where $\gap(W)$ is the number of matrix-vector multiplications the power
    method uses to compute the dominant left eigenvector of $W$.
    \label{thm:diver-algo-time-complexity}
\end{theorem}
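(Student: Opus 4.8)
The plan is to charge the running time of Algorithm~\ref{alg:diver-heuristic} step by step and sum the contributions; throughout I use that $G$ is sparse ($|E| = \bigoh(n)$), so a single matrix--vector product with $W$ or $W^\trans$ costs $\bigoh(n)$. Three of the steps are elementary. Line~\ref{alg:diver-heuristic:step1} runs the power method on $W^\trans$, which by the definition of $\gap(W)$ performs $\gap(W)$ sparse matrix--vector multiplications, hence $\bigoh(n\gap(W))$ in total (the final $\ell_1$-normalization is $\bigoh(n)$). Line~\ref{alg:diver-heuristic:step2} extracts the $n_{src}$ largest entries of $\pi$: a linear-time order-statistic pass finds the threshold and the $n_{src}$ winners in $\bigoh(n)$, and ordering those winners is $\bigoh(n_{src}\log n_{src})$. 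Line~\ref{alg:diver-heuristic:step8} extracts the $k$ highest of the $n_{src} n$ computed scores, $\bigoh(n_{src} n)$ for selection plus $\bigoh(k\log k)$ to order them; the $\bigoh(n_{src} n)$ part is dominated by the scoring cost below.

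For the scoring loop, lines~\ref{alg:diver-heuristic:step5}--\ref{alg:diver-heuristic:step7}, there are $|R|\cdot n = n_{src} n$ candidate edges $(r,c)$, and for each one $f_\pi(r,c)$ is evaluated with the approximation of Sec.~\ref{sec:fpi-small-num-of-js}: the sum in~(\ref{eq:fpi}) is truncated to the $\bigoh(n_{src})$ top-centrality indices $j$ (plus $j=r,c$), and every summand is assembled in $\bigoh(1)$ from $\pi$, $\xnew$, and mean first passage times already available from line~\ref{alg:diver-heuristic:step4}. So each score costs $\bigoh(n_{src})$ and the loop costs $\bigoh(n n_{src}^2)$.

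The one step that needs real argument is line~\ref{alg:diver-heuristic:step4}. By Proposition~\ref{thm:random-walk-length} (in scale-free networks) a single random walk of length $L=\bigoh(n)$ suffices for the $\bigoh(n_{src} n)$ passage times to and from the nodes of $R$ occurring in~(\ref{eq:fpi}) to converge to within the desired tolerance. The point to establish carefully is that the walk can be processed \emph{online} at $\bigoh(n_{src})$ amortized cost per step --- the naive $\bigoh(n)$ per step would be fatal, since top-centrality nodes are revisited $\Theta(L)$ times. This is done with, for each $r\in R$, (i) the time of $r$'s most recent visit together with the list of distinct nodes reached since that visit (for MFPTs \emph{from} $r$), and (ii) symmetrically the first-visit times of the nodes inside $r$'s current inter-visit excursion (for MFPTs \emph{to} $r$): at each walk step only the $\bigoh(n_{src})$ bookkeeping entries indexed by the current node are updated, and a list is flushed only on a visit to its own $r$, so the total flushing work for a fixed $r$ telescopes to $\bigoh(L)$. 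Hence line~\ref{alg:diver-heuristic:step4} costs $\bigoh(L n_{src})=\bigoh(n n_{src})$, absorbed into $\bigoh(n n_{src}^2)$.

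Summing the pieces, $\bigoh(n\gap(W)) + \bigoh(n + n_{src}\log n_{src}) + \bigoh(n n_{src}) + \bigoh(n n_{src}^2) + \bigoh(n_{src} n + k\log k)$ collapses to $\bigoh(n(\gap(W)+n_{src}^2) + n_{src}\log n_{src} + k\log k)$, which is the claim. The main obstacle is exactly the amortized-$\bigoh(n_{src})$-per-step bookkeeping for the walk-based MFPT estimator of line~\ref{alg:diver-heuristic:step4} (and invoking Proposition~\ref{thm:random-walk-length} to fix $L=\bigoh(n)$); everything else is routine cost accounting.
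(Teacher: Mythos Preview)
Your proof is correct and follows the same step-by-step cost accounting as the paper's own proof. The only notable difference is that for line~\ref{alg:diver-heuristic:step4} the paper simply asserts cost $\bigoh(n)$ (one walk step $=\bigoh(1)$), whereas you justify an amortized $\bigoh(n_{src})$ per step for the MFPT bookkeeping, giving $\bigoh(n\,n_{src})$; since this is still absorbed by the $\bigoh(n\,n_{src}^2)$ scoring term, the final bound is unchanged, and your treatment of this step is in fact more careful than the paper's.
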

\begin{proof}
    In step~\ref{alg:diver-heuristic:step1} of Algorithm~\ref{alg:diver-heuristic},
    we compute the dominant left eigenvector $\pi$ of $W$, which can be done using
    power method. The later performs $\gap(W)$ matrix-vector multiplications, each
    of whom has a linear time complexity for sparse $W$. Thus, this step's complexity
    is $T_1 = \bigoh(\gap(W)n)$.
    The cost of selecting top $n_{src}$ elements out of $n$ at Step~\ref{alg:diver-heuristic:step2}
    is $T_2 = \bigoh(n + n_{src} \log(n_{src}))$.
    In step~\ref{alg:diver-heuristic:step4}, following Sec.~\ref{sec:estimating-mfpt} and,
    in particular, Proposition~\ref{thm:random-walk-length}, we estimate MFPTs via a
    $\bigoh(n)$-long finite random walk, so this step's cost is $T_3 = \bigoh(n)$.
    At steps~\ref{alg:diver-heuristic:step5}-\ref{alg:diver-heuristic:step7}, we compute
    $n_{src} n$ edge scores $f_\pi$. Following the method of Sec.~\ref{sec:fpi-small-num-of-js},
    each $f_{\pi}(r, c)$ is computed in time $\bigoh(n_{src})$, bringing time complexity
    of steps 4-6 to $T_{\ref{alg:diver-heuristic:step5}-\ref{alg:diver-heuristic:step7}} = \bigoh(n_{src}^2 n)$.
    Finally, selection of top $k$ out of $n_{src} n$ items at step~\ref{alg:diver-heuristic:step8}
    is performed in time $T_7 = \bigoh(n_{src} n + k \log{k})$.
    If we collect the expressions for $T_1, \dots, T_7$, we get
    $T = \bigoh(n(\gap(W) + n_{src}^2) + n_{src} \log{n_{src}} + k \log{k})$.
    \qedbull
\end{proof}

In Theorem~\ref{thm:diver-algo-time-complexity}, $\gap(W)$ is the number of iterations
required for convergence of the power method for computing eigencentrality vector $\pi$
of $W$. While the specific value of $\gap(W)$ depends on $W$'s spectral gap
$\lambda_2 / \lambda_1$, in practise, $\gap(W)$ is usually assumed to be a reasonably
small constant. Thus, assuming that $\gap(W)$ is bounded, as well as noticing that
we choose both $n_{src}$ and $k$ to be small, that is, $n_{src} \ll n$ and $k \ll n$,
it immediately follows from Theorem~\ref{thm:diver-algo-time-complexity} that
Algorithm~\ref{alg:diver-heuristic} is computable in time $\bigoh(n)$. For practical
purposes, the hidden constant factor can be reduced by considering only some of $n$
nodes as destinations for the candidate edges; for example, we can consider only the
destinations 2 hops away from the sources, increasing the acceptance likelihood of the
recommended edges.

\section{Experimental Results}
\label{sec:experiments}

We use Algorithm~\ref{alg:diver-heuristic} to solve \diver on scale-free networks
$(n = 250, \gamma = -2.5)$. The initial user opinions $x \in [0, 1]^n$ are generated
uniformly at random.
The adversary uniformly randomly selects 16 users and changes their opinions to $1.0$,
defining $\xnew$. At this stage, our goal is to add edges to the network to make the new
asymptotic consensus value \acvnewxpis as close as possible to the original \acvs.
We start adding new edges to the network, 5 at a time, using
Algorithm~\ref{alg:diver-heuristic} for edge selection. The number of candidate edge
source nodes $n_{src}$ is $25$. The edge addition process stops either when the number
of added edges reaches 180, or if \diver's signed objective $\acvnewxpis - \acvs$ has
become smaller than $10^{-8}$. The results are reported in Figures~\ref{fig:exp-results}.
\begin{figure*}[ht]
    \centering
    \begin{subfigure}{0.48\linewidth}
        \includegraphics[width=\linewidth]{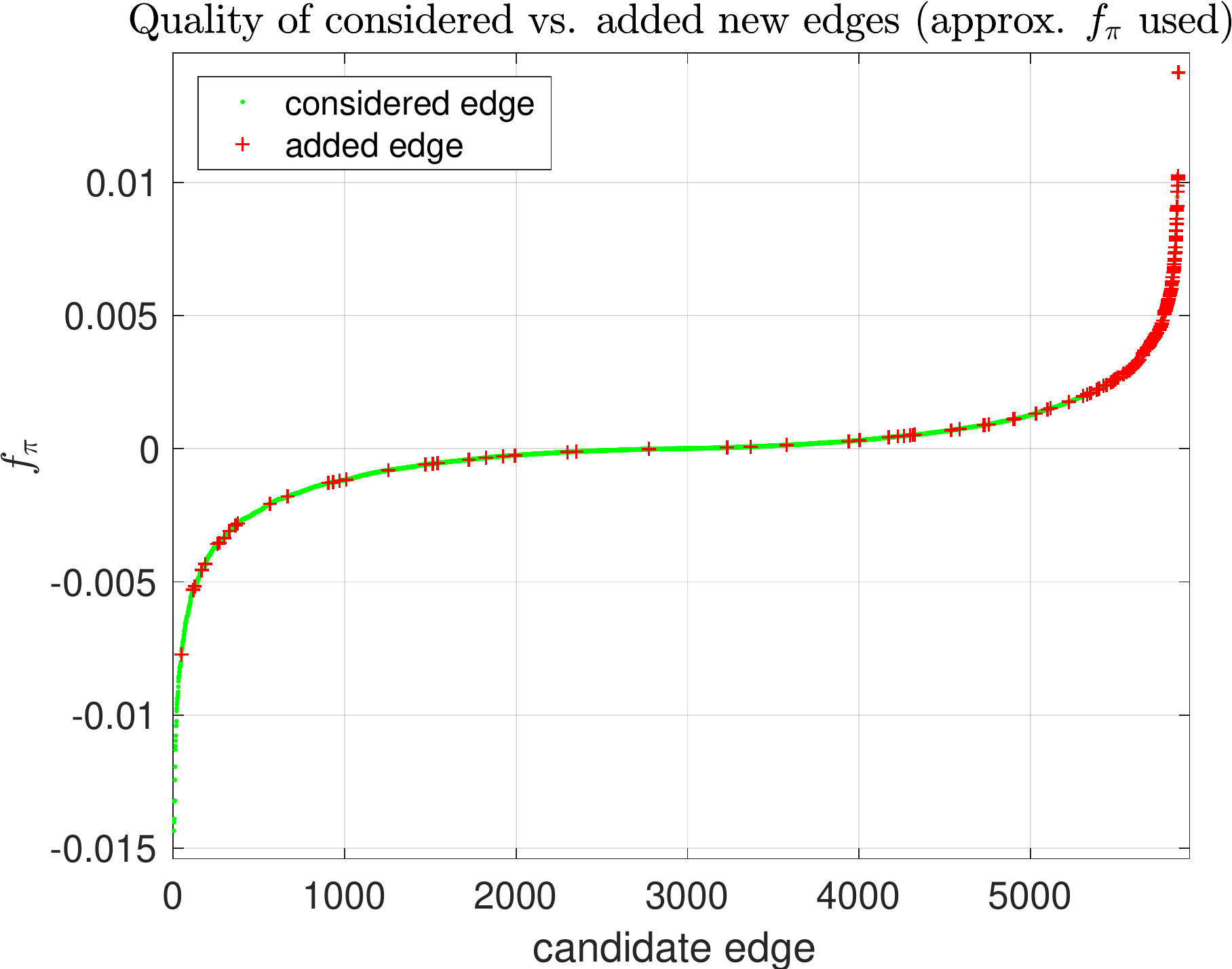}
        \caption{}
        \label{fig:edge-selection}
    \end{subfigure}
    \hfill
    \begin{subfigure}{0.48\linewidth}
        \includegraphics[width=\linewidth]{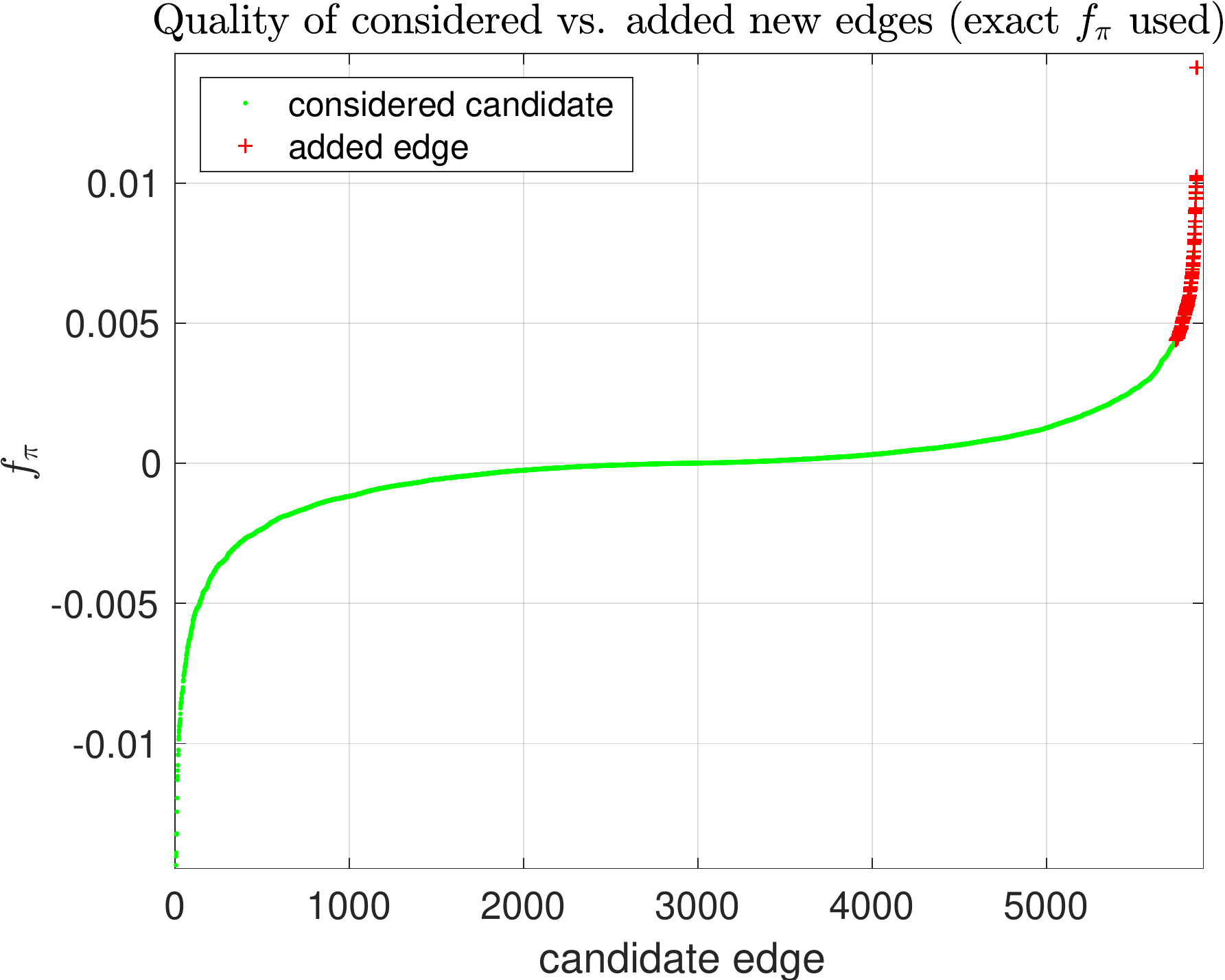}
        \caption{}
        \label{fig:edge-selection-exact-fpi}
    \end{subfigure}
    \newline
    \begin{subfigure}{0.48\linewidth}
        \includegraphics[width=\linewidth]{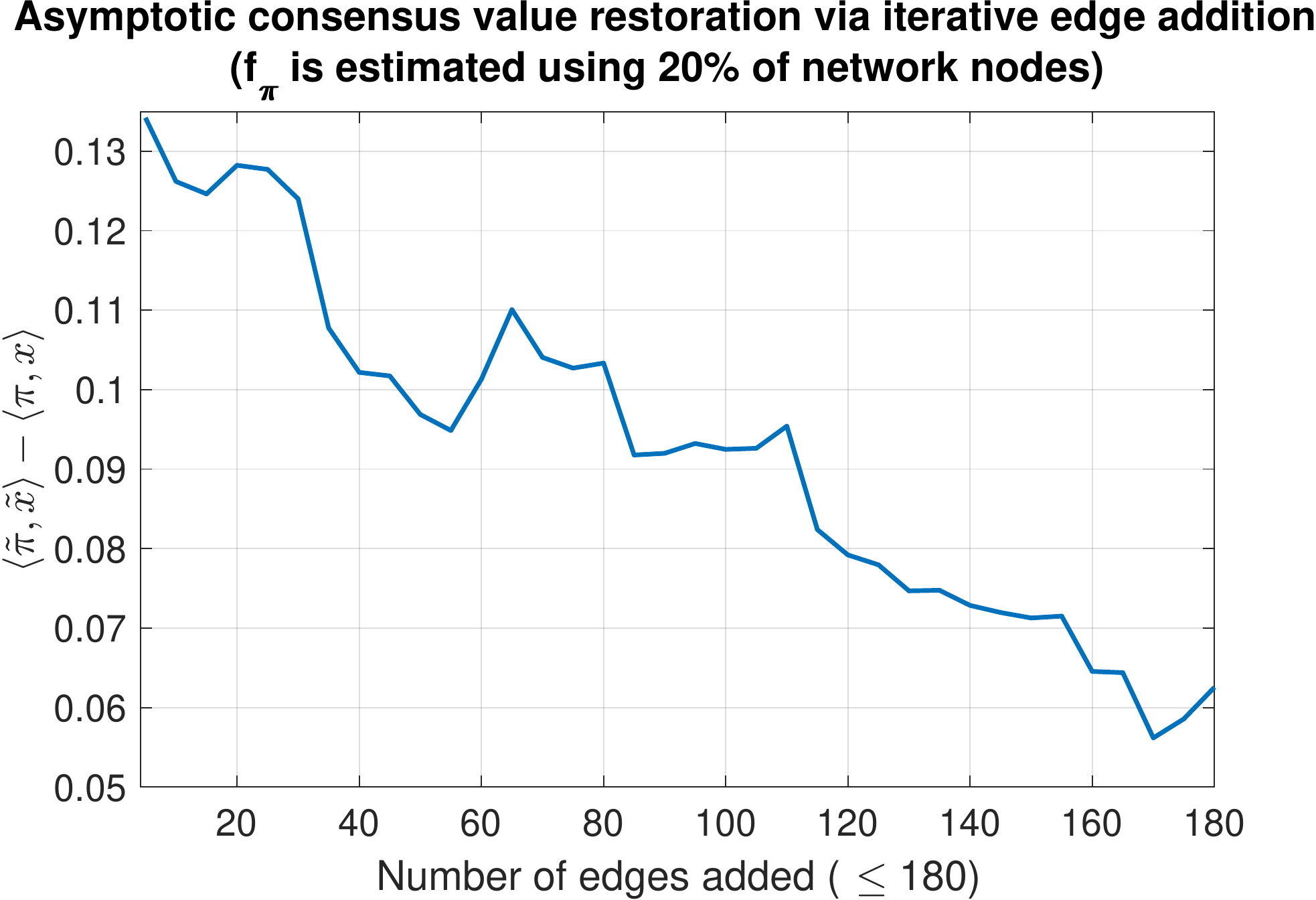}
        \caption{}
        \label{fig:diver-run}
    \end{subfigure}
    \hfill
    \begin{subfigure}{0.48\linewidth}
        \includegraphics[width=\linewidth]{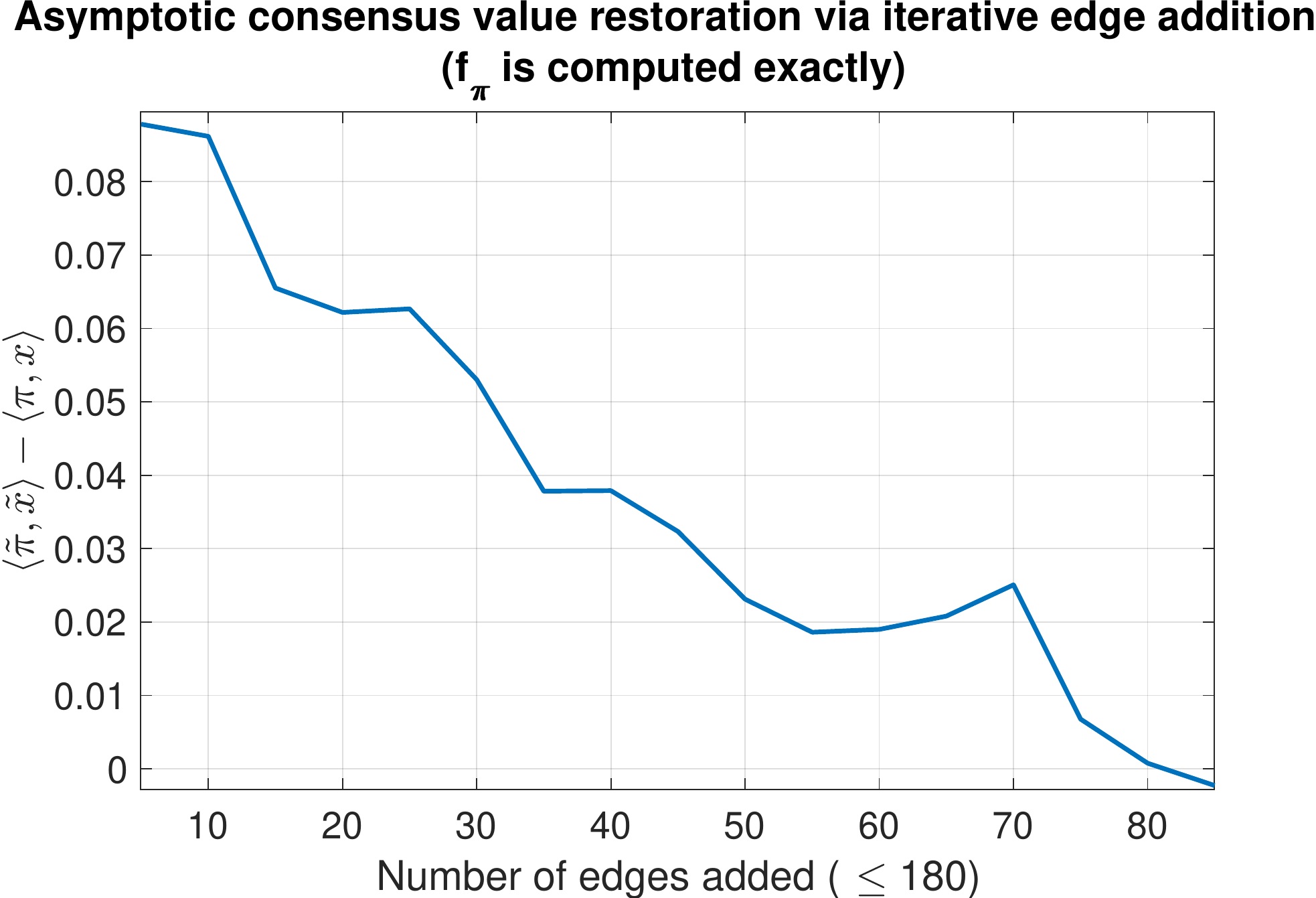}
        \caption{}
        \label{fig:diver-run-exact-fpi}
    \end{subfigure}
    \caption{
        Solving \diver via iterative edge addition using Algorithm~\ref{alg:diver-heuristic}.
    }
    \label{fig:exp-results}
\end{figure*}

Figures~\ref{fig:edge-selection} and~\ref{fig:edge-selection-exact-fpi} show the quality $f_\pi$ of the edges that
were suggested by Algorithm~\ref{alg:diver-heuristic} for addition to the network, where
the edge quality is either estimated or computed exactly, respectively.
In Fig.~\ref{fig:edge-selection},
we can see that, if $f_{\pi}$ are estimated using only 20\% of top-centrality nodes in the
computation, then a few ``bad'' edges are selected by the heuristic, but most of the added
edges are top-quality (have large values of $f_{\pi}$). In Fig.~\ref{fig:edge-selection-exact-fpi},
for the case of exact computation of $f_\pi$, all added edges are top-quality, though, a small
number of top-quality edges are missed, because their source nodes are not among the $n_{src}$ ones considered.

Fig.~\ref{fig:diver-run} and Fig.~\ref{fig:diver-run-exact-fpi} show how \diver's objective
changes during iterative edge addition, when $f_\pi$ are either estimated or computed exactly.
When we use exact $f_\pi$ (Fig.~\ref{fig:diver-run-exact-fpi}), it takes 17 iterations
(85 new edges) to drive the asymptotic consensus value \acvnewxpis back close to its original
state \acvs. In case of using estimates of $f_\pi$ (Fig.~\ref{fig:diver-run}), the process
stops having reached the allowed maximum of 180 new edges, and the asymptotic consensus
value \acvnewxpis gets within $\approx 0.06$ (10\%) of its original state \acvs.

\section{Discussion and Future Work}
\label{sec:discussion-and-future}

In this work, we have formulated \diver---a problem of strategically adding edges to the
network in order to disable the effect of external influence altering opinions of select
users. Using a functional reduction from the classic subset sum problem, we have proven
that this problem is NP-hard even for the case of undirected networks. Due to the problem's
hardness, we have focused on designing a heuristic for it. To that end, we have provided
a perturbation analysis, formally answering the question of how the network nodes'
eigencentralities and, hence, \diver's objective function change when a single edge is
added to the network. The latter analysis led to the definition of candidate edge scores,
that quantify the potential impact of the candidate edges, allowing to add them to the network
in a greedy fashion. We have also provided insights into how to compute these edge scores
in scale-free-like networks in pseudo-constant time, resulting in a pseudo-linear-time
heuristic for \diver. One of these insights is related to efficiently estimating mean first
passage times in Markov chains to and from high-centrality states. We have confirmed our
theoretical findings in experiments.

Our results are rather general, and the provided insights and theory can be applied to
other problems of strategically manipulating eigenvector centrality in networks. Simultaneously,
this work opens many avenues for potential \emph{future research}, including the following.

\vspace{0.04in} \noindent \hspace{0.04in} \bull
\diver targets optimization of the scalar asymptotic consensus value, 
being the value to which all the opinions asymptotically converge under DeGroot model.
It is possible to generalize \diver to the case of such opinion dynamics models as Friedkin-Johnsen~\cite{friedkin1999social} or the non-linear DeGroot
model~\cite{amelkin2017polar}, where the users asymptotically disagree, and, hence,
we need to optimize the asymptotic opinion distribution, rather than a scalar.

\vspace{0.04in} \noindent \hspace{0.04in} \bull
For the purposes of efficiently computing potential impact of candidate edge addition,
we have studied the question of how to efficiently estimate mean first passage times to
or from top-centrality states in a Markov chain. While our answer to the latter question
was based on an empirical study on scale-free networks, providing formal convergence
bounds for MFPT estimation would benefit many scientific areas dealing with Markov
processes.

\vspace{0.04in} \noindent \hspace{0.04in} \bull
Finally, while \diver was proposed as a method for fighting external influence upon
the opinions of a social network's users, it clearly can be used as an influence
tool. Thus, it would be fruitful to study the ways to identify whether an edge
recommendation process in a social network targets strategic change of the opinion
distribution.

\bibliographystyle{ieeetr}
\bibliography{biblio-alias,biblio}

\end{document}